\numberwithin{equation}{section}
\newtheorem{theorem}{Theorem}
\providecommand{\keywords}[1]{
  \small	
  \textbf{\textit{Keywords---}} #1
}
\title{\textbf{A Three-Party Lightweight Quantum Key Distribution Protocol in a
Restricted Quantum Environment}}
\begin{document}

\author[1]{Mustapha Anis Younes \footnote{Corresponding Author: \texttt{mustaphaanis.younes@univ-bejaia.dz}}}
\author[2]{Sofia Zebboudj \footnote{\texttt{sofia.zebboudj@univ-ubs.fr}}}
\author[3]{Abdelhakim Gharbi \footnote{\texttt{abdelhakim.gharbi@univ-bejaia.dz}}}
\affil[1,2]{Université de Bejaia, Faculté des Sciences Exactes, Laboratoire de Physique Théorique, 06000 Bejaia, Algérie}
\affil[2]{ENSIBS, Université de Bretagne Sud, 56000 Vannes, France}
\date{\vspace{-8ex}}

\maketitle 
\pagenumbering{arabic}
\pagenumbering{arabic}

\begin{abstract}

 This study proposes a new lightweight quantum key distribution (LQKD) protocol based on the four-particle cluster state within a quantum-restricted environment. The protocol enables a quantum-capable user to simultaneously establish two separate secret keys with two "classical" users, who are limited to performing only the Hadamard operation and measurements in the $Z$ basis. By adopting a one-way qubit transmission approach, the proposed protocol addresses several limitations of existing semi-quantum key distribution (SQKD) schemes that rely on two-way or circular transmission methods: (1) it eliminates the need for classical participants to be equipped with costly quantum devices to defend against quantum Trojan horse attacks; (2) it reduces the qubit transmission distance; and (3) it achieves higher qubit efficiency. Consequently, the proposed LQKD protocol is both more lightweight and practical than existing SQKD protocols. Furthermore, the security analysis shows that, in the ideal case, the protocol achieves the same level of security as fully quantum protocols. Finally, the study proves the unconditional security of the protocol in the non-ideal case, demonstrating a noise tolerance close to that of the BB84 protocol.
    
\end{abstract}

\keywords{Quantum cryptography; Multiparty semi-quantum key distribution; Quantum Trojan horse attack; Four-particle cluster states.}

\section{Introduction}

Quantum key distribution (QKD) protocols allow two parties to establish a shared secret key. By leveraging the laws of quantum mechanics, the key remains secure even against an all-powerful adversary limited only by the laws of nature. The first QKD protocol \cite{Bennett2014}, known as BB84, was introduced by Bennett and Brassard in 1984 and relies on the no-cloning theorem of non-orthogonal single-photon states. Since then, numerous QKD protocols have been proposed \cite{Ekert1991,Bennett1992,BechmannPasquinucci1999,Scarani2004,Grosshans2002,Lo2012,Mayers, Long2002,Shu2023,Sharma2021,Hwang2003,Lai2020}
, utilizing various types of quantum resources. Although QKD can achieve unconditional security \cite{Shor2000,Lucamarini2009,Lo2004,Barrett2012,Leverrier2009}, most existing protocols require participants to possess full quantum capabilities, which is unrealistic as not all users can afford or operate expensive quantum devices.

To address this issue, Boyer et al. \cite{Boyer2007,Boyer2009} introduced the concept of "semi-quantum environment", which includes two types of users: quantum and classical. According to the definition, quantum users possess full quantum capabilities, whereas classical users are restricted to performing the following operations: (1) reflect particles without disturbance; (2) measure qubits in the $Z$ basis ${\ket{0}, \ket{1}}$; (3) generate qubits in the $Z$ basis; and (4) reorder qubits via different delay lines. In 2007, Boyer et al. \cite{Boyer2007} proposed the first semi-quantum key distribution (SQKD) protocol. Since then, various SQKD protocols have been developed \cite{Boyer2017,He2018,Zou2015,Iqbal2020,Li2016,Ye2018,Hajji2021}. In 2015, Krawec \cite{Krawec2015} introduced the mediated framework, which involves an untrusted, fully quantum third party (TP) acting as a mediator to help two classical participants to establish a secure key, further reducing the quantum burden on the users. Several SQKD protocols have since been designed based on this framework \cite{Liu2018,Lin2019,Lu2020,Chen2021,Guskind2022,Ye2023,Tsai2019,Tsai2021,Hwang2023}. Krawec was also the first to prove the unconditional security of an SQKD protocol \cite{Krawec2015a}. Later, SQKD protocols based on single-qubit states were shown to be unconditionally secure as well \cite{Krawec2014, Krawec2016,Krawec2016a,Zhang2020}, demonstrating that semi-quantum protocols can achieve the same level of security as fully quantum ones.

The generalization of quantum key distribution (QKD) to multipartite scenarios is useful for many applications \cite{Murta2020,Wehner2018} (e.g. broadcasting) and achieving this task using pairwise QKD is an inefficient approach \cite{Epping2017}. The first multi-user semi-quantum key distribution (MSQKD) protocol was introduced by Zhang et al. \cite{XianZhou2009} in 2009, using single-qubit states. However, their scheme suffers from low qubit efficiency. In 2019, Zhou et al. \cite{Zhou2019} used the properties of the four-particle cluster state to propose a new MSQKD protocol. Although their protocol addressed the efficiency issue, it exhibited relatively low noise tolerance compared to standard pairwise SQKD protocols. In 2022, Tian et al. \cite{Tian2022} proposed an MSQKD protocol based on Hyperentangled Bell states, which improved the channel capacity. The following year, Ye et al. \cite{Ye2023} introduced a mediated MSQKD protocol, in which classical users could establish a secret key while remaining secure against a potentially dishonest quantum user. This protocol employs Bell states and demonstrates a noise tolerance close to that of the BB84 protocol.

Even though the semi-quantum framework was shown to be very promising in terms of practicality. Typical SQKD protocols (multi-party or not) adopt a two-way or circular qubit transmission approach. This gives rise to several challenges, namely: 

\begin{itemize}
    \item[(1)] Classical users need to be equipped with additional costly quantum devices to mitigate the quantum Trojan horse attacks. This not only increases the quantum burden on the classical users, which goes against the original intent of the semi-quantum environment, but it also reduces the qubit efficiency. 
    \item[(2)] The qubits travel longer distances than in one-way schemes. This leads to higher qubit loss and gives to the attacker more opportunities to interact with the qubits.  
    \item[(3)] The efficiency of SQKD protocols is significantly lower than fully-quantum protocols.
\end{itemize}

To address the first two issues, Tsai et al. \cite{Tsai2019,Tsai2021,Tsai2022,Tsai2023} proposed a new semi-quantum environment known as the "restricted quantum environment", which also includes two types of participants: classical and quantum. Quantum users still possess full quantum capabilities, whereas classical users are limited to the following two quantum capabilities: (i) perform the Hadamard operation; and (ii) measure qubits in the $Z$ basis.  In this environment, classical participants are free from qubit-preparation device and they do not need additional costly quantum devices to mitigate against the Trojan horse attack, significantly reducing their quantum burden. In 2019, Tsai et al. \cite{Tsai2019} proposed the first lightweight mediated semi-quantum key distribution (LMSQKD) protocol based on single photons. However, the efficiency of this protocol remained relatively low. Two years later, Tsai and Yang \cite{Tsai2021} introduced another LMSQKD protocol, this time based on Bell states, which achieved significantly better qubit efficiency and allowed two classical participants to share a secret key with the help of a dishonest third party. However, those protocols only allow two users to share keys and do not consider the multiparty scenario. In 2024, Tsai and Wang \cite{Tsai2024} introduced the first mediated" multiparty" quantum key distribution (M-MQKD) protocol in the restricted quantum environment, based on measurement properties of quantum graph states. This protocol enables a classical user to share secret keys with multiple remote classical users in a general network environment. Nevertheless, due to the use of quantum repeaters, the qubit efficiency of this protocol remains relatively low compared to existing multi-party SQKD protocols.

In this paper, we propose a novel lightweight quantum key distribution (LQKD) protocol based on the four-particle cluster state. This state possesses a high persistency of entanglement \cite{Briegel2001}, making it a more advantageous quantum resource for multi-party schemes than other classes of maximally entangled states. The proposed LQKD protocol is a three-party scheme that enables quantum Charlie to simultaneously establish two secret keys: one with classical Alice and one with classical Bob, gaining time and qubit efficiency compared to two-party schemes. Since it adopts the restricted quantum environment proposed by Tsai et al. \cite{Tsai2024}, the protocol is inherently immune to quantum Trojan horse attacks and reduces the transmission cost of qubits thanks to its one-way qubit transmission approach. This not only enhances its practicality compared to other multi-party SQKD schemes, but also significantly reduces the quantum burden on classical users, thus aligning with the original intent of the semi-quantum model. Furthermore, the proposed protocol achieves higher qubit efficiency than other multi-party schemes in the three-party scenario. The security analysis also shows that the protocol is secure against both internal and external attackers employing well-known attacks such as intercept-and-resend, measure-and-resend, and collective attacks. Finally, we prove the unconditional security of the protocol by deriving a lower bound on the secret key rate in the asymptotic scenario. Our results show that the protocol achieves a noise tolerance of $9.68\%$, which is close to the $11\%$ threshold \cite{Shor2000} of the BB84 protocol.

The remainder of this paper is organized as follows: Section \ref{sec2} provides a brief overview of the mathematical tools used throughout the paper. Section \ref{sec3_protocol} presents the specific steps of the proposed three-party LQKD protocol. Section \ref{sec4_security} describes the detailed security analysis of the protocol. A discussion and comparison with other schemes are given in Section \ref{sec5_comp}. Finally, Section \ref{sec6_conc} concludes the paper

\section{Preliminaries}\label{sec2}

In this section, we offer a brief overview of the mathematical tools that we use throughout this paper.

\subsection{The Hadamard operation}
We begin by introducing the Hadamard operation $H$ that is represented by 

\begin{align} \label{hadamard}
        H &= \frac{1}{\sqrt{2}}(\ketbra{0}{0} + \ketbra{1}{0} + \ketbra{0}{1} + \ketbra{1}{1}) \\
          &= \cfrac{1}{\sqrt{2}}\begin{pmatrix}
                         1 & 1 \\
                         1 & -1
                 \end{pmatrix}.
    \end{align}

\subsection{The Bell states}
Next, we present the Bell states, which we define as

\begin{align} \label{phi+}  
    \ket{\phi^+} &= \frac{1}{\sqrt{2}} ( \ket{00} + \ket{11} ), \\ \label{phi-}  
    \ket{\phi^-} &= \frac{1}{\sqrt{2}} ( \ket{00} - \ket{11} ), \\ \label{psi+}  
    \ket{\psi^+} &= \frac{1}{\sqrt{2}} ( \ket{01} + \ket{10} ), \\ \label{psi-} 
    \ket{\psi^-} &= \frac{1}{\sqrt{2}} ( \ket{01} - \ket{10} ).
\end{align}

The Bell states form an orthonormal basis in the four-dimensional Hilbert space $\mathcal{H}^2\otimes \mathcal{H}^2$, known as the Bell basis.

In this paper, we use the propriety between Bell states and the Hadamard operation to detect eavesdroppers. The Hadamard operation affects the Bell states as follows:

\begin{align} \label{h on phi+}
    (H \otimes H) \ket{\phi^+} &= \ket{\phi^+}, \\ \label{h on phi-}
    (H \otimes H) \ket{\phi^-} &= \ket{\psi^+}, \\ \label{h on psi+}
    (H \otimes H) \ket{\psi^+} &= \ket{\phi^-}, \\  \label{h on psi-}
    (H \otimes H) \ket{\psi^-} &= -\ket{\psi^-}. 
\end{align}

\subsection{The four-particle cluster state}
Continuing our overview, we now introduce the four-particle cluster state as it is used in the proposed protocol as information carrier. It can be expressed in the following ways:

\begin{align}\label{cluster}
    \ket{C}_{1234} &= \frac{1}{2}\big( \ket{0000} + \ket{0110} + \ket{1001} - \ket{1111} \big)_{1234}, \\  \label{cluster_bell_bell}
                      &= \frac{1}{2}\Big( \ket{\phi^+}_{12}\ket{\phi^-}_{34} + \ket{\phi^-}_{12}\ket{\phi^+}_{34} + \ket{\psi^+}_{12}\ket{\psi^+}_{34} - \ket{\psi^-}_{12}\ket{\psi^-}_{34} \Big)_{1234}, \\  
     \label{cluster_bell14}
                      &= \frac{1}{\sqrt{2}} \Big(\ket{0}_{2}\ket{\phi^+}_{14}\ket{0}_{3} + \ket{1}_{2}\ket{\phi^-}_{14}\ket{1}_{3}\Big). \\
    \label{cluster_bell23}
                      &= \frac{1}{\sqrt{2}} \Big(\ket{0}_{1}\ket{\phi^+}_{23}\ket{0}_{4} + \ket{1}_{1}\ket{\phi^-}_{23}\ket{1}_{4}\Big). 
\end{align}

The four-particle cluster state is a maximally entangled state. It was shown in \cite{Briegel2001,Duer2004} that in the case of $n > 3$, $n$-particle cluster states have two basic properties: (1) maximum connectedness; and (2) a high persistency of entanglement. These properties are easily observed in Eq. (\ref{cluster_bell23}). For instance, measuring qubits 1 and 4 in the $Z$ basis projects qubits 2 and 3 into a pure Bell state with certainty. Also, it can be seen that if we perform a local measurement on one or two qubits of the cluster state, the remaining particles will still be entangled. These properties make cluster states a more advantageous resource for multi-party schemes compared to other classes of maximally entangled states, such as GHZ states, which are more vulnerable to decoherence.

\subsection{Entropy functions}
Finally, we introduce some elements of information theory as they are going to be useful to us in the computation of the secret key rate. Suppose we have a probability distribution $p_1, p_2, \cdots, p_n$, such as $\sum_{i=1}^{n} p_{i}=1$. The \textit{Shannon entropy} associated with this probability distribution is defined as 

\begin{align} 
    H(p_1, p_2, \cdots, p_n) = - \sum_{i=1}^{n} p_{i}\log p_{i},
\end{align}
where all logarithms in this paper are base two unless specified otherwise. Additionally, we denote $h(x) = -p\log p - (1-p)\log (1-p)$, as the \textit{binary Shannon entropy} which is a special case of $H(p_1, p_2, \cdots, p_n)$ where we only have two probabilities $(p, 1-p)$.

The generalization of Shannon entropy for a quantum state $\rho$ is given by the \textit{von Neumann entropy} 

\begin{align}
    S(\rho) = -tr(\rho \log \rho).
\end{align}

Let $\rho$ be a finite dimensional density operator with eigenvalues $\{\lambda_1, \lambda_2, \cdots, \lambda_{n}\}$. Then, the Von Neumann entropy can be re-expressed as

\begin{align}
    S(\rho) = - \sum_{i=1}^{n} \lambda_{i}\log \lambda_{i}.
\end{align}

Consider now a bipartite state $\rho_{AB} \in \mathcal{H}_A \otimes \mathcal{H}_B$ that represent the density operator of the composite system $AB$. The state of the subsystems $A$ and $B$ are given by the partial trace operation $\rho_A = tr_B(\rho_{AB})$ and $\rho_B = tr_A(\rho_{AB})$. We often write $S(AB)$ to mean $S(\rho_{AB})$ and $S(A)$ to mean $S(\rho_{A})$. We can now define \textit{the conditional von Neumann entropy} as 

\begin{align}
    S(A|B) = S(AB) - S(B).
\end{align}

\section{The proposed protocol}\label{sec3_protocol}

In this paper, we propose a three-party lightweight quantum key distribution protocol. The protocol allows a quantum-capable user, Charlie, to simultaneously establish two different secret keys with two classical users, Alice and Bob (one shared with Alice and the other with Bob). The classical users only have two capabilities: 

\begin{enumerate}
    \item Measure qubits in the $Z$ basis.
    \item Perform the Hadamard operation $H$.
\end{enumerate}

As for Charlie, she is capable to perform any quantum operation as well as accessing a quantum memory. Between Charlie and each of the participants there is an insecure quantum channel as well as an authenticated classical channel.

\subsection{Steps of the protocol}

The steps of our protocol unfolds as follow:

\subsubsection{Charlie's preparation}

Charlie prepares $N = 4n(1+\epsilon)$ four-particle cluster states, where $n$ is an integer and $\epsilon$ is a fixed positive parameter. Charlie divides the states into four sequences 

\begin{align}
    S_l = \{s_l^1, s_l^2, \cdots, s_l^N\},
\end{align}
where $l = 1, 2, 3, 4$, and the $l-th$ particle of each cluster state constitute the sequence $S_l$.

Charlie keeps the sequences $S_3$ and $S_4$ in her quantum memory and sends the sequence $S_1$ and $S_2$ to Alice and Bob, respectively.

\subsubsection{Alice's and Bob's operations}

Upon receiving their respective sequences, Alice and Bob randomly and independently apply either the identity operation $I$ (i.e. do nothing) or the Hadamard operation $H$ to each qubit in their sequences, before proceeding to measure them in the $Z$ basis. They each record their measurement outcomes in their respective classical registers. Note that Alice and Bob register the measurement outcome as $0$ if they obtain the state $\ket{0}$ and $1$ otherwise.

\subsubsection{Alice's and Bob's public announcement}

Alice and Bob announce, via the authenticated classical channels, which operation they implemented on each qubit before their measurements.

\subsubsection{Charlie's operations}

Depending on Alice's and Bob's chosen operation, Charlie performs the following actions:

    \begin{itemize}
        \item \textbf{Case 01:} Alice and Bob both implements $I$ on the qubits $s_1^i$ and $s_2^i$, respectively. In this case, Charlie will also implement the same operation on $s_3^i$ and $s_4^i$ before measuring them in $Z$ basis. The state (\ref{cluster}) of the system collapse into one of the four following outcomes with equal probability:

        \begin{align} \label{outcome 1st case}
        \ket{C}_{1234} \stackrel{Measure}{\longrightarrow} \begin{cases}
        \ket{00}_{14}\ket{00}_{23},  \\
        \ket{00}_{14}\ket{11}_{23},  \\
        \ket{11}_{14}\ket{00}_{23},  \\
        \ket{11}_{14}\ket{11}_{23}. 
    \end{cases}
      \end{align}
       According to the results shown above, Charlie's measurement outcomes on the qubits $s_3^i$ must always match Bob's measurement outcomes on $s_2^i$, while her outcomes on $s_4^i$ must always align with Alice's outcome on $s_1^i$. 
       
       To detect eavesdroppers, Charlie randomly selects half of the positions corresponding to this case and requests Alice and Bob to disclose their measurement results for those positions over the public channel. She then verifies if the measurement outcomes are consistent with the expected results shown in the above equations. If the observed error rate exceeds a predefined threshold, the protocol is aborted; Otherwise, Alice (Bob) uses her (his) remaining results to contribute to the formation of her (his) shared raw key $R_{CA}$ ($R_{CB}$) with Charlie.

        \item \textbf{Case 02:} Alice implements $I$ on $s_1^i$, while Bob applies $H$ on $s_2^i$. In this scenario, Charlie will perform the same operation as Alice on $s_4^i$ and the same operation as Bob on $s_3^i$ before measuring them in the $Z$ basis. Based on the relation between Bell states and the Hadamard operation given in Eqs. (\ref{h on phi+}) and (\ref{h on phi-}), the system (\ref{cluster_bell23}) evolves as follows:

         \begin{align}
            (I \otimes H \otimes H \otimes I) \ket{C}_{1234} &= \frac{1}{\sqrt{2}} \Big(\ket{0}_{1}\ket{\phi^+}_{23}\ket{0}_{4} + \ket{1}_{1}\ket{\psi^+}_{23}\ket{1}_{4}\Big), \\
            &=  \frac{1}{2} \begin{pmatrix}
                \ket{0}_{1}\ket{00}_{23}\ket{0}_{4} + \ket{0}_{1}\ket{11}_{23}\ket{0}_{4} \\
                \ket{1}_{1}\ket{01}_{23}\ket{1}_{4} + \ket{1}_{1}\ket{10}_{23}\ket{1}_{4}
            \end{pmatrix}
        \end{align}
        Upon measurement, the above state collapse into one the following results with equal probability

        \begin{align} \label{outcome 2nd case}
        (I \otimes H \otimes H \otimes I) \ket{C}_{1234} \stackrel{Measure}{\longrightarrow} \begin{cases}
        \ket{00}_{14}\ket{00}_{23},  \\
        \ket{00}_{14}\ket{11}_{23},  \\
        \ket{11}_{14}\ket{01}_{23},  \\
        \ket{11}_{14}\ket{10}_{23}. 
    \end{cases}
      \end{align}
        
        From these outcomes, we observe that Charlie’s results on $s_4^i$ must always match Alice’s outcomes on $s_1^i$, while her results on $s_3^i$ are expected to be correlated with Bob’s outcomes on $s_2^i$ if Alice measured the state $\ket{0}_1$, and anti-correlated if she measured $\ket{1}_1$.

        To detect a potential eavesdropper, Charlie randomly selects half of the positions corresponding to this case and requests Alice and Bob to disclose their outcomes for those positions. She then analyzes the error rate and if it exceeds a predefined threshold, the protocol is aborted; otherwise, Alice’s remaining outcomes contribute to the formation of her shared raw key $R_{CA}$ with Charlie.

        \item \textbf{Case 03:} Alice applies $H$ and Bob $I$ on $s_1^i$ and $s_2^i$, respectively. In this case, Charlie will apply the same operation as Bob on $s_3^i$ and the same operation as Alice on $s_4^i$. According to Eqs.(\ref{h on phi+}) and (\ref{h on phi-}), the state (\ref{cluster_bell14}) evolves in the following manner:

        \begin{align}
            (H \otimes I \otimes I \otimes H) \ket{C}_{1234} &= \frac{1}{\sqrt{2}} \Big(\ket{0}_{2}\ket{\phi^+}_{14}\ket{0}_{3} + \ket{1}_{2}\ket{\psi^+}_{14}\ket{1}_{3}\Big), \\
            &=  \frac{1}{2} \begin{pmatrix}
                \ket{0}_{2}\ket{00}_{14}\ket{0}_{3} + \ket{0}_{2}\ket{11}_{14}\ket{0}_{3} \\
                + \ket{1}_{2}\ket{01}_{14}\ket{1}_{3} + \ket{1}_{2}\ket{10}_{14}\ket{1}_{3}
            \end{pmatrix}.
        \end{align}

        After measurement, the participants obtain one of the following outcomes with equal probability 

        \begin{align} \label{outcome 3rd case}
        (H \otimes I \otimes I \otimes H) \ket{C}_{1234} \stackrel{Measure}{\longrightarrow} \begin{cases}
        \ket{00}_{14}\ket{00}_{23},  \\
        \ket{11}_{14}\ket{00}_{23},  \\
        \ket{01}_{14}\ket{11}_{23},  \\
        \ket{10}_{14}\ket{11}_{23}. 
    \end{cases}
      \end{align}
        Based on these results, Charlie's measurement outcomes on the qubits $s_3^i$ must always align with Bob's measurement outcomes. As for the qubits $s_4^i$, Charlie expects her measurement results to be correlated with Alice's results if Bob measured the state $\ket{0}_2$ and anti-correlated otherwise.
        
        To check for eavesdropping, Charlie randomly selects half of the positions corresponding to this case and requests Alice and Bob to reveal their measurement results for those selected positions. If the observed error rate exceeds a predefined threshold, the protocol is aborted; otherwise, Bob uses his remaining results to contribute to the formation of his shared raw key $R_{CB}$ with Charlie.

        \item \textbf{Case 04:} Both Alice and Bob implement $H$. In this situation, Charlie performs a Bell measurement on $s_3^i$ and $s_4^i$. Based on Eqs. (\ref{h on phi+}-\ref{h on psi-}), the state of the system can be written as follows:

        \begin{align}
                (H \otimes H \otimes I \otimes I) \ket{C}_{1234} 
                &= \frac{1}{2} \begin{bmatrix}
                     \ket{00}_{12} \left(\cfrac{\ket{\phi^-}_{34} + \ket{\psi^+}_{34}}{\sqrt{2}}\right) \\
                     + \ket{11}_{12}\left(\cfrac{\ket{\phi^-}_{34} - \ket{\psi^+}_{34}}{\sqrt{2}}\right) \\
                     + \ket{01}_{12}\left(\cfrac{\ket{\phi^+}_{34} + \ket{\psi^-}_{34}}{\sqrt{2}}\right) \\
                     + \ket{10}_{12}\left(\cfrac{\ket{\phi^+}_{34} - \ket{\psi^-}_{34}}{\sqrt{2}}\right)
                \end{bmatrix}
                \end{align} %Verify this equation

       The system then collapses into one of the eight following outcomes with equal probability:

       \begin{align} \label{outcome 4th case}
        (H \otimes H \otimes I \otimes I) \ket{\Phi}_{1234}  \stackrel{Measure}{\longrightarrow} \begin{cases}
        \ket{00}_{12}\ket{\phi^-}_{34}, \quad \ket{00}_{12}\ket{\psi^+}_{34}  \\
        \ket{11}_{12}\ket{\phi^-}_{34}, \quad \ket{11}_{12}\ket{\psi^+}_{34}  \\
        \ket{01}_{12}\ket{\phi^+}_{34}, \quad \ket{01}_{12}\ket{\psi^-}_{34}  \\
        \ket{10}_{12}\ket{\phi^+}_{34}, \quad \ket{10}_{12}\ket{\psi^-}_{34}. 
    \end{cases}
      \end{align}

        Based on these results, if Charlie obtains the state $\ket{\phi^-}_{34}$ or $\ket{\psi^+}_{34}$, she expects Alice and Bob to announce correlated results. On the other hand, if she measures $\ket{\phi^+}_{34}$ or $\ket{\psi^-}_{34}$, she expects Alice and Bob to announce anti-correlated results. 
        
        Charlie requests Alice and Bob to disclose all their measurement results associated with this case and evaluates the error rate; if it exceeds a predetermined threshold, the protocol aborts.

    \end{itemize}

It is important to note that Charlie never reveals the outcomes of her measurements. The expected results for each of the four emerging cases are shown in Table 1.

\subsubsection{Classical post-processing}

Charlie performs error correction (EC) and privacy amplification (PA) separately with Alice and Bob. Based on their respective raw keys, $R_{CA}$ and $R_{CB}$, she extracts two final secret keys, denoted $K_{AC}$ and $K_{BC}$, which she shares with Alice and Bob, respectively.

\begin{table}[h]
\small
\caption{\label{table 1}The expected results based on the users' operations.}
\begin{tabular}{cccc}
\toprule
Alice's operation & Bob's operation & Charlie's operation & Expected results \\ 
\midrule

$I$ & $I$ & $I \otimes I$ & 
$\begin{array}{l}
mr_A = mr_{C_4} \\
mr_B = mr_{C_3}
\end{array}$ \\

\midrule

$I$ & $H$ & $H \otimes I$ & 
$\begin{array}{l}
mr_A = mr_{C_4} = 
\begin{cases}
0 \Rightarrow mr_B = mr_{C_3} \\
1 \Rightarrow mr_B = mr_{C_3} \oplus 1
\end{cases}
\end{array}$ \\

\midrule

$H$ & $I$ & $I \otimes H$ & 
$\begin{array}{l}
mr_B = mr_{C_3} = 
\begin{cases}
0 \Rightarrow mr_A = mr_{C_4} \\
1 \Rightarrow mr_A = mr_{C_4} \oplus 1
\end{cases}
\end{array}$ \\

\midrule

$H$ & $H$ & Bell measurement & 
$\begin{array}{l}
mr_{C_{34}} = \ket{\phi^-} \text{ or } \ket{\psi^+} \Rightarrow mr_A = mr_B \\
mr_{C_{34}} = \ket{\phi^+} \text{ or } \ket{\psi^-} \Rightarrow mr_A = mr_B \oplus 1
\end{array}$ \\

\bottomrule
\end{tabular}
\noindent\footnotesize
$mr_A$ denotes Alice's measurement result on $s_1^i$, \\
$mr_B$ denotes Bob's measurement result on $s_2^i$, \\
$mr_{C_j}$ denotes Charlie's measurement result on $s_j^i$, where $j = 3, 4$.
\end{table}

\section{Security analysis}\label{sec4_security}

In this section, we analyze the security of the proposed scheme against both internal and external attackers. We demonstrate in both scenarios that the protocol is secure against well-knows attacks such as the intercept and resend attack, the measure and resend attack, the Trojan horse attacks, and collective attack. We also prove the unconditional security of the proposed protocol by computing a lower bound for the secret key rate $r$ in the asymptotic scenario.

\subsection{Internal attack}

Dishonest participants represent a more serious threat than external attackers because they actively participate in the protocol. Therefore, internal attacks require closer attention in the security analysis. In the proposed scheme, Alice and Bob play symmetric roles. Therefore, without loss of generality, we consider Bob to be the dishonest participant; his goal is to steal information about Alice's and Charlie's secret key without being detected. In order to do so, he may launch any possible attack on the quantum channel between Alice and Charlie. Since the protocol employs a one-way communication structure, Bob can only launch his attacks once when the qubits are transmitted from Charlie to Alice.

\subsubsection{Intercept and resend attack}

In this attack, Bob first intercepts the sequence $S_1$ that Charlie sends to Alice. He stores it in his quantum memory and sends a previously prepared fake particle sequence to Alice. Bob then waits for Alice to publicly announce the operations she applied to each qubit. Once he obtains this information, he applies the same operations to the qubits of $S_1$. However, this strategy is bound to fail. Suppose Bob prepares his fake sequence in the $Z$ basis. Since this preparation is random, there is always a nonzero probability that his measurement outcomes differ from the initial states of the fake sequence, and therefore from the outcomes announced by Alice.

More specifically, for each qubit, Bob has a probability of $\frac{1}{2}$ of obtaining a result that matches Alice’s outcome, regardless of the case. Consequently, the total probability that Bob is detected using this eavesdropping strategy is given by $1 - \left(\frac{1}{2}\right)^M$, where $M$ denotes the number of positions used for the eavesdropping check. This probability approaches 1 as $M$ becomes sufficiently large.

\subsection{Measure and resend attack}

Bob intercepts the sequence $S_1$, measures each qubit in the $Z$ basis, and then forwards the measured sequence to Alice. If Alice applies the identity to $s_1^i$, Bob successfully learns one bit of the raw key shared between Alice and Charlie without being detected. However, if Alice applies the Hadamard operation, his interference will be detected in both the following cases:

\begin{itemize}
    \item If Bob directly measures $s_1^i$ and $s_2^i$ in the $Z$ basis, the state $\ket{C}_{1234}$ collapses into one of the outcomes shown in Eq. (\ref{outcome 1st case}). However, when Alice and Charlie later apply $H$ to $s_1^i$ and $s_4^i$, respectively, there is only a $\frac{1}{2}$ probability that their outcomes will be consistent with the expected results shown in Eq. (\ref{outcome 3rd case}).
    
    \item If Bob applies $H$ on $s_2^i$, the state $\ket{C}_{1234}$ collapses into one of the outcomes given in Eq. (\ref{outcome 2nd case}). However, when Alice later applies $H$ to $s_1^i$, there is a $\frac{1}{2}$ probability that Charlie obtains the correct measurement outcome when she measures $s_3^i$ and $s_4^i$ in the Bell basis.

\end{itemize}
Overall, the total probability for Bob to be detected using this strategy is $1 - \left(\frac{3}{4}\right)^M$, where $M$ denotes the number of positions used for the eavesdropping check. This probability approaches 1 when $M$ is large enough.

%Check the probability 

In the proposed scheme, when we examine the possible outcomes listed in Table \ref{table 1}, it is clear that Bob cannot deduce Alice’s outcomes from his own. Therefore, a second strategy for Bob would be to measure the qubits $s_1^i$ and $s_2^i$ in the Bell basis. While this might allow him to correlate Alice's measurement results with his own when they select the same operation, it would simultaneously disrupt the intended correlations between their qubits and those held by Charlie. As a result the probability for Bob to be detected using this strategy is $1 - \left(\frac{3}{4}\right)^M$, where $M$ denotes the number of positions used for the eavesdropping check. This probability converges to 1 as $M$ becomes sufficiently large.

\subsection{Trojan horse attacks}

Quantum Trojan horse attacks, as described in \cite{Deng2005,Cai2006}, are common implementation-dependent attacks in which an attacker can exploit invisible photons, or delayed photons to extract confidential information from users without being detected. Regardless of the resources employed, the strategy remains the same: the attacker intercepts the qubits sent to the users and attaches a probing photon to each one. These probing photons are prepared and inserted in such a way that they go undetected by the users' detectors. When a participant performs an operation on a received qubit, the probing photon undergoes the same operation. After retrieving the Trojan horse photons, the attacker can obtain confidential information about the users' operations by measuring them.

For a Trojan horse attack to succeed, the attacker must retrieve the malicious photons they previously inserted.  Typical semi-quantum protocols rely on two-way or circular communication structures, which inherently offer the attacker an opportunity to recover the malicious photons they previously introduced. Consequently, such protocols are considered vulnerable to Trojan horse attacks. To defend against these threats, semi-quantum protocols often equip classical participants with costly quantum devices such as photon number splitters (PNS) and wavelength filters (WF). However, this approach may contradict the original purpose of using the semi-quantum environment as the burden on the classical participant is significantly increased.  Furthermore, the detectors of quantum Trojan horse attacks must use additional qubits to verify the attacks, so the qubit efficiency of quantum protocols also drops. In contrast, the proposed scheme adopts a one-way qubit transmission method(i.e., qubits are not returned by the Alice and Bob). This structure prevents Bob from retrieving the previously inserted malicious photons, thereby eliminating any opportunity to acquire information about the Alice’s secret. As a result, the protocol demonstrates robustness against Trojan horse attacks without requiring the participants to rely on costly quantum devices to mitigate these attacks.

\subsection{Collective attack}

In general, the most effective strategy for Bob to gain information about Alice’s and Charlie’s secret key without being detected is to perform a unitary attack. Depending on how Bob interacts with the quantum signals sent from Charlie to Alice, and when or how he performs his measurements, these attacks can be categorized into three classes: individual, collective, and coherent attacks. Individual attacks impose the greatest constraints on Bob, while coherent attacks are the most powerful, where Bob is only restricted by the fundamental laws of quantum physics. It has been shown in \cite{Christandl2009,Renner2007} that proving security against collective attacks implies security against coherent attacks. Collective attacks are defined as follows:

\begin{itemize}
    \item Bob attacks each traveling qubit sent by Charlie to Alice through the quantum channel independently of the others, using the same strategy.
    \item Bob entangles an ancillary particle with each qubit sent by Charlie to Alice.
    \item Bob can keep his ancilla in his quantum memory until any later time convenient to him. He can then perform the optimal measurement based on the side information obtained through public discussion.
\end{itemize}

\begin{theorem}

    Assume Bob performs a collective attack on the qubits sent by Charlie to Alice by applying a unitary operation $U_B$ to the transmitted qubits along with their associated ancillary systems. In this case, there exists no unitary operation that enables Bob to extract any meaningful information about the secret key shared between Alice and Charlie without disturbing the original quantum systems, resulting in a nonzero probability of being detected during the eavesdropping check.  
\end{theorem}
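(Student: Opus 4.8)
The plan is to reduce the four-partite correlation tests of Table~\ref{table 1} to the familiar two-basis no-disturbance argument for a single transmitted qubit. Since the attack is collective (identical and independent on each signal) and the transmission is one-way, it suffices to analyse the action of $U_B$ on one qubit $s_1^i$ travelling to Alice together with an ancilla $\ket{E}$ that Bob prepares. The most general such unitary can be written as
\begin{align}
U_B\ket{0}\ket{E} &= \ket{0}\ket{e_{00}} + \ket{1}\ket{e_{01}}, \\
U_B\ket{1}\ket{E} &= \ket{0}\ket{e_{10}} + \ket{1}\ket{e_{11}},
\end{align}
where the $\ket{e_{jk}}$ are (unnormalised) ancilla states constrained only by the unitarity of $U_B$. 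My goal is to show that the eavesdropping checks force $\ket{e_{01}} = \ket{e_{10}} = 0$ and $\ket{e_{00}} = \ket{e_{11}}$, so that $U_B\ket{i}\ket{E} = \ket{i}\ket{e}$ becomes a product state carrying no information about $i$.

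First I would exploit the cases in which Alice applies the identity (Cases 01 and 02). By Eq.~(\ref{cluster_bell14}), qubits $1$ and $4$ are perfectly $Z$-correlated, so the check $mr_A = mr_{C_4}$ tests Alice's $Z$-measurement of $s_1^i$ against Charlie's $Z$-measurement of $s_4^i$. Propagating the $\ket{00}_{14}$ and $\ket{11}_{14}$ branches through $U_B$, a bit flip on qubit $1$ (the $\ket{e_{01}}$ and $\ket{e_{10}}$ terms) produces the forbidden events $mr_A \neq mr_{C_4}$; demanding a vanishing error rate on this check forces $\ket{e_{01}} = \ket{e_{10}} = 0$, i.e. $U_B\ket{0}\ket{E} = \ket{0}\ket{e_{00}}$ and $U_B\ket{1}\ket{E} = \ket{1}\ket{e_{11}}$.

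Next I would use the cases in which Alice applies $H$ (Cases 03 and 04), where Alice effectively measures $s_1^i$ in the $\{\ket{+},\ket{-}\}$ basis and Charlie's correlated measurement is likewise rotated by $H$. Rewriting the $Z$-constrained attack in this basis gives
\begin{align}
U_B\ket{\pm}\ket{E} = \tfrac{1}{2}\big[\ket{+}(\ket{e_{00}} \pm \ket{e_{11}}) + \ket{-}(\ket{e_{00}} \mp \ket{e_{11}})\big],
\end{align}
so the conditional $X$-basis correlations of Table~\ref{table 1} are preserved only if $\ket{e_{00}} = \ket{e_{11}}$. Combining the two conditions yields $U_B\ket{i}\ket{E} = \ket{i}\ket{e}$ with $\ket{e} := \ket{e_{00}} = \ket{e_{11}}$; the reduced ancilla $\rho_E = \ketbra{e}{e}$ is then independent of the key bit, so Bob's optimal measurement reveals nothing. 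Contrapositively, any attack that correlates the ancilla with the key (some $\ket{e_{jk}}$ violating these relations) necessarily injects errors into at least one check, giving a nonzero detection probability; this establishes the claim.

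The main obstacle I anticipate is the bookkeeping that justifies the reduction to a single-qubit two-basis test. Because Bob legitimately holds $s_2^i$ and learns all announced bases, I must argue that this side information does not enlarge his attack: the decoupling $U_B\ket{i}\ket{E}=\ket{i}\ket{e}$ holds at the level of qubit $1$ irrespective of qubits $2,3,4$, so conditioning on Bob's qubit and on the public announcements cannot reinstate any dependence on Alice's bit. A second point requiring care is that Charlie never discloses her own outcomes (as emphasised after Table~\ref{table 1}); I would therefore phrase each check purely through the announced correlations between $mr_A$, $mr_B$ and Charlie's privately held $mr_{C_3}, mr_{C_4}$, and verify that the $Z$- and $X$-type constraints I impose are exactly the ones Charlie can actually enforce.
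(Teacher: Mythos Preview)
Your proposal is correct and follows essentially the same approach as the paper: both deduce $\ket{e_{01}}=\ket{e_{10}}=0$ from the $Z$-correlation checks in Cases~01--02 and then $\ket{e_{00}}=\ket{e_{11}}$ from the $X$-type checks in Cases~03--04, concluding that the ancilla decouples. The only cosmetic differences are that the paper expands the full four-qubit state case by case rather than reducing first to the single-qubit two-basis test, and it parametrises $U_B$ with explicit coefficients $a,b,c,d$ and orthogonal $\ket{e_{ij}}$ instead of your unnormalised ancilla vectors.
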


\begin{proof}

Bob intercepts $S_1$ and applies $U_B$ to each qubit $s_1^i$ along with an associated ancilla $B$, initially prepared in an arbitrary normalized state $\ket{e}$. The action of $U_B$ is defined as follows:

\begin{align} \label{U_B on 0}
    U_{B} (\ket{0} \otimes \ket{e}) &= a \ket{0}\ket{e_{00}} + b\ket{1} \ket{e_{01}}, \\ \label{U_B on 1}
    U_{B} (\ket{1} \otimes \ket{e}) &= c \ket{0}\ket{e_{10}} + d \ket{1}\ket{e_{11}},
\end{align}
such that $\abs{a}^{2}+\abs{b}^{2}=\abs{c}^{2}+\abs{d}^{2}=1$ and the states $\ket{e_{ij}}$ are orthogonal states that Bob can distinguish. The state of the quantum system can be written in the following way:

\begin{align}
    U_{B} \ket{C}_{1234} \ket{e}
    &= \frac{1}{2} \begin{bmatrix}
        \big( U_{E} \ket{0}_1 \ket{e} \big) \big( \ket{000}_{234} + \ket{110}_{234} \big)  \\
        + \big( U_{E} \ket{1}_1 \ket{e} \big) \big( \ket{001}_{234} - \ket{111}_{234} \big) 
    \end{bmatrix} \nonumber \\
    &= \frac{1}{2} \begin{bmatrix}
        \ket{0}_1 \otimes \begin{pmatrix}
            a \ket{e_{00}} \big( \ket{000}_{234} + \ket{110}_{234} \big) \\
            + c \ket{e_{10}} \big( \ket{001}_{234} - \ket{111}_{234} \big) 
        \end{pmatrix} \\
        + \ket{1}_1 \otimes \begin{pmatrix}
            b \ket{e_{01}}  \big( \ket{000}_{234} + \ket{110}_{234} \big) \\
            + d \ket{e_{11}} \big( \ket{001}_{234} - \ket{111}_{234} \big) 
        \end{pmatrix}
    \end{bmatrix} 
\end{align}

This is the state that Alice, Bob, and Charlie share among themselves. Let us now examine how this state evolves depending on the different cases that emerge in the protocol:

\begin{itemize}
    \item \textbf{Case 01:} The operations $I\otimes I\otimes I\otimes I$ are applied on the qubits $s_1^i$, $s_2^i$, $s_3^i$, and $s_4^i$ respectively. In this case, the state of the system remains the same and can be rewritten as:

     \begin{align}
        \ket{C_1}&= \frac{1}{2} \begin{bmatrix}
            a \ket{e_{00}} \otimes
            \big( \ket{00}_{14}\ket{00}_{23} + \ket{00}_{14}\ket{11}_{23} \big) \\
            + c \ket{e_{10}} \otimes \big( \ket{01}_{14}\ket{00}_{23} - \ket{01}_{14}\ket{11}_{23}\big) \\
            +  
            b \ket{e_{01}} \otimes \big( \ket{10}_{14}\ket{00}_{23} + \ket{10}_{14}\ket{11}_{23} \big)  \\
            + d \ket{e_{11}} \otimes \big( \ket{11}_{14}\ket{00}_{23} - \ket{11}_{14}\ket{11}_{23} \big) 
        \end{bmatrix}
    \end{align}

    \item \textbf{Case 02:} The operations $I\otimes H\otimes H\otimes I$ are performed on the qubits $s_1^i$, $s_2^i$, $s_3^i$, and $s_4^i$ respectively. In this case, the state of the system becomes:
    \begin{align}
        \ket{C_2}&= \frac{1}{2} \begin{bmatrix}
            a \ket{e_{00}} \otimes
            \big( \ket{00}_{14}\ket{00}_{23} + \ket{00}_{14}\ket{11}_{23} \big) \\
            + c \ket{e_{10}} \otimes \big( \ket{01}_{14}\ket{01}_{23} - \ket{01}_{14}\ket{10}_{23} \big) \\
            +  
            b \ket{e_{01}} \otimes \big( \ket{10}_{14}\ket{00}_{23} + \ket{10}_{14}\ket{11}_{23} \big) \\
            + d \ket{e_{11}} \otimes \big( \ket{11}_{14}\ket{01}_{23} - \ket{11}_{14}\ket{10}_{23} \big) 
        \end{bmatrix}  
    \end{align}

    \item \textbf{Case 03:} The operations $H\otimes I\otimes I\otimes H$ are performed on the qubits $s_1^i$, $s_2^i$, $s_3^i$, and $s_4^i$ respectively. In this case, the state of the system evolves as follows:

    \begin{align}
        \ket{C_3}&= \frac{1}{4} \begin{bmatrix}
            \big( \ket{00}_{14}\ket{00}_{23} + \ket{01}_{14}\ket{11}_{23} \big) \otimes \ket{f_{00}} \\
            + \big( \ket{01}_{14}\ket{00}_{23} + \ket{00}_{14}\ket{11}_{23} \big) \otimes \ket{f_{01}} \\
            + \big( \ket{10}_{14}\ket{00}_{23} + \ket{11}_{14}\ket{11}_{23} \big) \otimes \ket{f_{10}} \\
            + \big( \ket{11}_{14}\ket{00}_{23} + \ket{10}_{14}\ket{11}_{23} \big)\otimes \ket{f_{11}}
        \end{bmatrix} 
    \end{align}
    where

    \begin{align} \label{f_00}
        \ket{f_{00}} &= a\ket{e_{00}} + b\ket{e_{01}} + c\ket{e_{10}} + d\ket{e_{11}}, \\ \label{f_01}
        \ket{f_{01}} &= a\ket{e_{00}} + b\ket{e_{01}} - c\ket{e_{10}} - d\ket{e_{11}}, \\ \label{f_10}
        \ket{f_{10}} &= a\ket{e_{00}} - b\ket{e_{01}} + c\ket{e_{10}} - d\ket{e_{11}}, \\ \label{f_11}
        \ket{f_{11}} &= a\ket{e_{00}} - b\ket{e_{01}} - c\ket{e_{10}} + d\ket{e_{11}},
    \end{align}
    and $\braket{f_{00}}{f_{01}} = \braket{f_{10}}{f_{11}} = 0$.

    \item \textbf{Case 04:} The operations $H\otimes H\otimes I\otimes I$ are applied on the qubits $s_1^i$, $s_2^i$, $s_3^i$, and $s_4^i$ respectively. In this case, the state of the system can be written as follows:

%RECHECK ALL YOUR EQUATIONS !!

    \begin{align}
        \ket{C_4}&= \frac{1}{4\sqrt{2}} \Big[ 
         \ket{00}_{12}\big( a\ket{e_{00}}+b\ket{e_{01}} \big) \big( \ket{\phi^+}_{34}+\ket{\phi^-}_{34}+\ket{\psi^+}_{34}-\ket{\psi^-}_{34} \big)            
     \nonumber \\
    &\quad\quad -\ket{00}_{12}\big( c\ket{e_{10}}+d\ket{e_{11}} \big) \big( \ket{\phi^+}_{34}-\ket{\phi^-}_{34}-\ket{\psi^+}_{34}-\ket{\psi^-}_{34} \big)    
     \nonumber \\
    &\quad\quad +\ket{01}_{12}\big( a\ket{e_{00}}+b\ket{e_{01}} \big) \big( \ket{\phi^+}_{34}+\ket{\phi^-}_{34}-\ket{\psi^+}_{34}+\ket{\psi^-}_{34} \big)    
    \nonumber \\
    &\quad\quad +\ket{01}_{12}\big( c\ket{e_{10}}+d\ket{e_{11}} \big) \big( \ket{\phi^+}_{34}-\ket{\phi^-}_{34}+\ket{\psi^+}_{34}+\ket{\psi^-}_{34} \big)    
    \nonumber \\
    &\quad\quad +\ket{10}_{12}\big( a\ket{e_{00}}-b\ket{e_{01}} \big) \big( \ket{\phi^+}_{34}+\ket{\phi^-}_{34}+\ket{\psi^+}_{34}-\ket{\psi^-}_{34} \big)    
    \nonumber \\
    &\quad\quad -\ket{10}_{12}\big( c\ket{e_{10}}-d\ket{e_{11}} \big) \big( \ket{\phi^+}_{34}-\ket{\phi^-}_{34}-\ket{\psi^+}_{34}-\ket{\psi^-}_{34} \big)    
    \nonumber \\
    &\quad\quad +\ket{11}_{12}\big( a\ket{e_{00}}-b\ket{e_{01}} \big) \big( \ket{\phi^+}_{34}+\ket{\phi^-}_{34}-\ket{\psi^+}_{34}+\ket{\psi^-}_{34} \big)    
    \nonumber \\
    &\quad\quad +\ket{11}_{12}\big( c\ket{e_{10}}-d\ket{e_{11}} \big) \big( \ket{\phi^+}_{34}-\ket{\phi^-}_{34}+\ket{\psi^+}_{34}+\ket{\psi^-}_{34} \big)    
    \Big].
    \end{align}

\end{itemize}

In the proposed protocol, Charlie requests Alice and Bob to reveal their measurement outcomes at specific positions via the public authenticated channel and compares them with her own to verify whether they match the expected results presented in Table \ref{table 1}. Therefore, to avoid detection, Bob must carefully adjust the unitary operation $U_B$ accordingly.

In order to avoid detection in the first case, Bob must ensure two things: (1) his measurement results on $s_2^i$ must always be correlated with Charlie’s results on $s_3^i$; and (2) Alice’s outcomes on $s_1^i$ must always align with Charlie’s outcomes on $s_4^i$. As for the second case, Bob must prevent the results of Alice and Charlie on qubits $s_1^i$ and $s_4^i$ from being anti-correlated. In order to achieve that, Bob must set:

\begin{align} \label{b=c=0}
    b=c=0,
\end{align}

Now to pass detection in the third case, Alice and Charlie outcomes on $s_1^1$ and $s_4^i$ must be correlated when Bob's outcome is $0$ and anti-correlated otherwise. To satisfy that, Bob must set the incorrect terms $\ket{f_{01}}$ and $\ket{f_{10}}$ as a zero vector. This requires meeting the condition:

\begin{align} \label{condition2}
    a\ket{e_{00}} + b\ket{e_{01}} - c\ket{e_{10}} - d\ket{e_{11}} &= 0, \\
    a\ket{e_{00}} - b\ket{e_{01}} + c\ket{e_{10}} - d\ket{e_{11}} &= 0.
\end{align}

Finally, to pass the security check in the fourth case, Bob has to make sure that his measurements are correlated with Alice's when Charlie measures the state $\ket{\phi^-}_{23}$ or $\ket{\psi^+}$ and anti-correlated otherwise. To achieve that, Bob must fulfill both conditions (\ref{b=c=0}) and (\ref{condition2}).

Regardless of whether Bob first adjusts $U_B$ to the first or second condition, it will lead to same following relation:

\begin{align}
    a\ket{e_{00}} = d\ket{e_{11}}, 
\end{align}
which means that Bob cannot distinguish $\ket{e_{00}}$ from $\ket{e_{11}}$ unless he sets $a=d=0$. However, combining this with (\ref{b=c=0}), it contradicts the equation $\abs{a}^{2}+\abs{b}^{2}=\abs{c}^{2}+\abs{d}^{2}=1$. Therefore, in order to satisfy both this equation and the condition $b = c = 0$, Bob must set $a = d = 1$. In this scenario, Bob cannot distinguish $\ket{e_{00}}$ from $\ket{e_{11}}$. Thus, the definition of $U_B$ can be rewritten as:

\begin{align} \label{U_E on 0}
    U_{E} (\ket{0} \otimes \ket{e}) &= \ket{0}\ket{e_{00}}, \\ \label{U_E on 1}
    U_{E} (\ket{1} \otimes \ket{e}) &= \ket{1}\ket{e_{00}},
\end{align}

We can see that Bob’s ancilla is independent of the qubits being attacked. Therefore, even though Charlie cannot detect the attack, Bob cannot extract any information about Charlie's and Alice's shared raw key. Thus, we have demonstrated that there exists no unitary operation that allows Bob to obtain any useful information about the secret key without being detected.

\end{proof}

\subsection{External attack}

In this scenario, we consider an external eavesdropper, Eve, who simultaneously attacks both quantum channels. Her objective is to obtain, without being detected, the secret keys that Charlie shares with each participant. Since it has been shown in \cite{Christandl2009,Renner2007} that security against collective attacks is sufficient to demonstrate security against any arbitrary general attacks, we focus solely on proving security against collective attacks in the following subsection. 
%Mention the Trojan horse attacks

\begin{theorem}

    Assume Eve performs a collective attack on the qubits sent by Charlie to the participants by applying a unitary operation $U_E$ to the transmitted qubits along with their associated ancillary systems. In this case, there exists no unitary operation that enables Eve to extract any useful information about the secret keys shared between Charlie and each of the participants without disturbing the original quantum systems, resulting in a nonzero probability of being detected during the eavesdropping check.  
\end{theorem}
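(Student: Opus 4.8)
The plan is to mirror the structure of the proof of Theorem 1, but now accounting for the fact that Eve intercepts \emph{both} transmitted sequences $S_1$ and $S_2$ rather than a single channel. First I would model Eve's collective attack by its action on each traveling qubit together with an attached ancilla. For the qubit of $S_1$ I would write
\begin{align}
    U_E(\ket{0}_1\ket{e}) &= a\ket{0}\ket{e_{00}} + b\ket{1}\ket{e_{01}}, \\
    U_E(\ket{1}_1\ket{e}) &= c\ket{0}\ket{e_{10}} + d\ket{1}\ket{e_{11}},
\end{align}
with $\abs{a}^2+\abs{b}^2=\abs{c}^2+\abs{d}^2=1$, and an analogous action with coefficients $a',b',c',d'$ and orthogonal ancilla states $\ket{e'_{ij}}$ for the qubit of $S_2$. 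I would then propagate $\ket{C}_{1234}$ through $U_E$ to obtain the global state shared by Alice, Bob, Charlie, and Eve after the attack, keeping qubits $1$ and $2$ each entangled with their respective ancilla.

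Next I would reproduce the four-case analysis exactly as in Theorem 1, applying the corresponding operations of Alice, Bob, and Charlie in each case and reading off the correlations that the eavesdropping check enforces according to Table~\ref{table 1}. Cases 01--03, which test one channel at a time, would impose on the $S_1$-part of $U_E$ precisely the constraints already derived in Theorem 1 (namely $b=c=0$ together with the vanishing of the ``incorrect'' ancilla combinations $\ket{f_{01}}$ and $\ket{f_{10}}$), and by the Alice--Bob symmetry of the protocol the same constraints would fall on the $S_2$-part, forcing $b'=c'=0$ and the analogous ancilla relations.

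The delicate step, and the one I expect to be the main obstacle, is Case 04: it is the only check that couples the two channels directly, since Charlie's Bell measurement on $s_3^i,s_4^i$ correlates Alice's and Bob's \emph{announced} outcomes rather than comparing them to Charlie's own. Here I would expand the doubly-attacked state in the Bell basis and verify that the required correlation/anti-correlation pattern cannot be preserved unless the ancilla contributions factor out; in particular I must rule out a joint, entangled-ancilla attack that might try to exploit cross-terms between the two qubits to leak information while still passing the check. Once the constraints from all four cases are combined, I expect them to force the relations $a\ket{e_{00}} = d\ket{e_{11}}$ together with $b=c=0$ (and their $S_2$ analogues), so that normalization yields $a=d=1$ and hence $U_E(\ket{0}\ket{e})=\ket{0}\ket{e_{00}}$, $U_E(\ket{1}\ket{e})=\ket{1}\ket{e_{00}}$ on each channel. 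I would then conclude, as in Theorem 1, that Eve's ancilla decouples entirely from the transmitted qubits, so she can pass every eavesdropping check only at the price of learning nothing about either raw key $R_{CA}$ or $R_{CB}$, which establishes the claim.
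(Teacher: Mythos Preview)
Your proposal has a genuine gap at the very first step: you model Eve's attack on the two channels as a \emph{product} of two single-qubit unitaries, one acting on $s_1^i$ with ancilla states $\ket{e_{ij}}$ and an independent one acting on $s_2^i$ with ancilla states $\ket{e'_{ij}}$. This is strictly weaker than the collective attack the theorem asks you to rule out. An external Eve who intercepts both sequences can apply a \emph{joint} unitary $U_E$ to the pair $(s_1^i,s_2^i)$ together with a single ancilla, and such a $U_E$ need not factor as $U_1\otimes U_2$. You yourself flag this (``rule out a joint, entangled-ancilla attack'') in Case~04, but by that point your parametrization is already too coarse to express such an attack, so there is nothing in your setup that could be shown to collapse.

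The paper's proof addresses exactly this by writing $U_E$ on the full two-qubit computational basis,
\[
U_E(\ket{ij}\otimes\ket{e'})=\sum_{kl}\,x^{(ij)}_{kl}\,\ket{kl}\ket{\chi^{(ij)}_{kl}},
\]
yielding sixteen coefficient/ancilla pairs. Case~01 alone kills twelve of them, and then Cases~02, 03, and 04 each force a separate identification among the four survivors, namely $a_0\ket{e_{00}}=b_1\ket{f_{01}}$, $a_0\ket{e_{00}}=c_2\ket{g_{10}}$, and $a_0\ket{e_{00}}=d_3\ket{h_{11}}$. Combining these makes all four ancilla states equal, so $U_E$ decouples. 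The argument is thus not two copies of Theorem~1 stitched together; it genuinely uses the interplay between the three nontrivial cases to pin down a joint two-qubit attack. To repair your approach you would need to start from the joint parametrization and carry the case analysis through at that level of generality.
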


\begin{proof}

Eve intercepts the sequences $S_1$ and $S_2$ and applies her unitary operation $U_E$ to each pair of qubits $s_1^i$ and $s_2^i$ along with their associated ancillary particle $E$, initially prepared in the state $\ket{e^\prime}$. We define the action of $U_E$ as follows:

\begin{align}
     U_{E} (\ket{00} \otimes \ket{e^\prime}) &= a_0 \ket{00}\ket{e_{00}} + a_1 \ket{01} \ket{e_{01}} + a_2 \ket{10} \ket{e_{01}} + a_3 \ket{11} \ket{e_{11}}, \\ 
     U_{E} (\ket{01} \otimes \ket{e^\prime}) &= b_0 \ket{00}\ket{f_{00}} + b_1 \ket{01} \ket{f_{01}} + b_2 \ket{10} \ket{f_{01}} + b_3 \ket{11} \ket{f_{11}}, \\
     U_{E} (\ket{10} \otimes \ket{e^\prime}) &= c_0 \ket{00}\ket{g_{00}} + c_1 \ket{01} \ket{g_{01}} + c_2 \ket{10} \ket{g_{01}} + c_3 \ket{11} \ket{g_{11}}, \\
     U_{E} (\ket{11} \otimes \ket{e^\prime}) &= d_0 \ket{00}\ket{h_{00}} + d_1 \ket{01} \ket{h_{01}} + d_2 \ket{10} \ket{h_{01}} + d_3 \ket{11} \ket{h_{11}}.
\end{align}
Where the states $\ket{e_{ij}}$, $\ket{f_{ij}}$, $\ket{g_{ij}}$, and $\ket{h_{ij}}$, such as $ij \in \{00, 01, 10, 11\}$, are sixteen states that Eve can distinguish. As for the coefficients $a_k$, $b_k$, $c_k$, and $d_k$, such as $k = 0, 1, 2, 3$, they satisfy the following relations:

\begin{align} \label{norm a_i}
    \abs{a_0}^2 + \abs{a_1}^2 + \abs{a_2}^2 + \abs{a_3}^2 &= 1, \\ \label{norm b_i}
    \abs{b_0}^2 + \abs{b_1}^2 + \abs{b_2}^2 + \abs{b_3}^2 &= 1, \\ \label{norm c_i}
    \abs{c_0}^2 + \abs{c_1}^2 + \abs{c_2}^2 + \abs{c_3}^2 &= 1, \\ \label{norm d_i}
    \abs{d_0}^2 + \abs{d_1}^2 + \abs{d_2}^2 + \abs{d_3}^2 &= 1.
\end{align}

After Eve's operation, the quantum system evolves as follows:

\begin{align}
    U_E \ket{C}_{1234}\ket{e^\prime} &= \frac{1}{2} \begin{bmatrix}
        \big( U_{E} \ket{00}_{12} \ket{e^\prime} \big)\ket{00}_{34} 
        + \big( U_{E} \ket{01}_{12} \ket{e^\prime} \big)\ket{10}_{34}  \\
        +\big( U_{E} \ket{10}_{12} \ket{e^\prime} \big)\ket{01}_{34} 
        - \big( U_{E} \ket{11}_{12} \ket{e^\prime} \big)\ket{11}_{34}
    \end{bmatrix}
     \nonumber \\
    &= \frac{1}{2} \begin{bmatrix}
        \big( 
            a_0 \ket{00}_{12}\ket{e_{00}} +  a_1 \ket{01}_{12}\ket{e_{01}} + a_2 \ket{10}_{12}\ket{e_{10}} + a_3 \ket{11}_{12}\ket{e_{11}} \big) \otimes \ket{00}_{34} \\
            + \big( 
            b_0 \ket{00}_{12}\ket{f_{00}} +  b_1 \ket{01}_{12}\ket{f_{01}} + b_2 \ket{10}_{12}\ket{f_{10}} + b_3 \ket{11}_{12}\ket{f_{11}} \big) \otimes \ket{10}_{34} \\
            + \big( 
            c_0 \ket{00}_{12}\ket{g_{00}} +  c_1 \ket{01}_{12}\ket{g_{01}} + c_2 \ket{10}_{12}\ket{g_{10}} + c_3 \ket{11}_{12}\ket{g_{11}} \big) \otimes \ket{01}_{34} \\
            - \big( 
            d_0 \ket{00}_{12}\ket{h_{00}} +  d_1 \ket{01}_{12}\ket{h_{01}} + d_2 \ket{10}_{12}\ket{h_{10}} + d_3 \ket{11}_{12}\ket{h_{11}} \big) \otimes \ket{11}_{34}
    \end{bmatrix}
\end{align}

If Eve wants to escape detection in the case where both Alice and Bob apply the identity operation $I$, she must set the following coefficients to zero:
\begin{align} \label{a_1 =...=d_2= 0}
a_1 = a_2 = a_3 = b_0 = b_2 = b_3 = c_0 = c_1 = c_3 = d_0 = d_1 = d_2 = 0.
\end{align}

Once this condition is satisfied, the state described above can evolve in the following ways, depending on the operations performed by the participants:

\begin{itemize}
    \item \textbf{Case 01:} Both Alice and Bob apply the identity $I$. The state of the system can be written 

    \begin{align}
        \ket{C^\prime_1} &= \frac{1}{2} \begin{pmatrix}
            a_0 \ket{0000}_{1234}\ket{e_{00}} + b_1 \ket{0110}_{1234}\ket{f_{01}}  \\
            + c_2 \ket{1001}_{1234}\ket{g_{10}} + d_3 \ket{0110}_{1234}\ket{h_{11}}
        \end{pmatrix}
    \end{align}
    \item \textbf{Case 02:} Alice applies the identity while Bob applies Hadamard. The state of the system evolves as follows:

    \begin{align} \label{c_2_prime}
        \ket{C^\prime_2} &= \frac{1}{4} \begin{bmatrix}
            \big( \ket{00}_{14}\ket{00}_{23} + \ket{00}_{14}\ket{11}_{23} \big)\big( a_0 \ket{e_{00}} + b_1 \ket{f_{01}} \big)  \\
         + \big( \ket{00}_{14}\ket{01}_{23} + \ket{00}_{14}\ket{10}_{23} \big)\big( a_0 \ket{e_{00}} - b_1 \ket{f_{01}} \big)  \\
         + \big( \ket{10}_{14}\ket{00}_{23} + \ket{11}_{14}\ket{11}_{23} \big)\big( c_2 \ket{g_{10}} - d_3 \ket{h_{11}} \big)  \\
         + \big( \ket{11}_{14}\ket{01}_{23} + \ket{11}_{14}\ket{10}_{23} \big)\big( c_2 \ket{g_{10}} + d_3 \ket{h_{11}} \big) 
        \end{bmatrix}
    \end{align}

    \item \textbf{Case 03:} Alice applies the Hadamard operation and Bob applies the identity. In this case, the state of the system becomes:

    \begin{align} \label{c_3_prime}
        \ket{C^\prime_3} &= \frac{1}{4} \begin{bmatrix}
            \big( \ket{00}_{14}\ket{00}_{23} + \ket{11}_{14}\ket{00}_{23} \big)\big( a_0 \ket{e_{00}} + c_2 \ket{g_{10}} \big) \\
        + \big( \ket{01}_{14}\ket{00}_{23} + \ket{10}_{14}\ket{00}_{23} \big)\big( a_0 \ket{e_{00}} - c_2 \ket{g_{10}} \big)  \\
        + \big( \ket{00}_{14}\ket{11}_{23} + \ket{11}_{14}\ket{11}_{23} \big)\big( b_1 \ket{f_{01}} - d_3 \ket{h_{11}} \big)  \\
        + \big( \ket{01}_{14}\ket{11}_{23} + \ket{10}_{14}\ket{11}_{23} \big)\big( c_2 \ket{g_{10}} + d_3 \ket{h_{11}} \big) 
        \end{bmatrix}.
    \end{align}

    \item \textbf{Case 04:} Both Alice and Bob perform the Hadamard operation. In this case, we can write the state of the system as follows:

    \begin{align} \label{c_4_prime}
        \ket{C^\prime_4} &= \frac{1}{4\sqrt{2}} \begin{bmatrix}
            \Big( \big(\ket{00}_{12} + \ket{11}_{12}\big)\ket{\phi^+}_{34} + \big( \ket{01}_{12} + \ket{11}_{12} \big)\ket{\phi^-}_{34}\Big)\Big(a_0 \ket{e_{00}} - d_3 \ket{h_{00}}\Big)  \\
        + \Big( \big(\ket{00}_{12} + \ket{11}_{12}\big)\ket{\phi^+}_{34} + \big( \ket{01}_{12} + \ket{11}_{12} \big)\ket{\phi^-}_{34}\Big)\Big(a_0 \ket{e_{00}} + d_3 \ket{h_{00}}\Big) \\
        + \Big( \big(\ket{00}_{12} - \ket{11}_{12}\big)\ket{\psi^+}_{34} + \big( \ket{01}_{12} - \ket{11}_{12} \big)\ket{\psi^-}_{34}\Big)\Big(b_1 \ket{f_{01}} + c_2 \ket{g_{10}}\Big)  \\
        + \Big( \big(\ket{11}_{12} - \ket{00}_{12}\big)\ket{\psi^-}_{34} + \big( \ket{01}_{12} + \ket{10}_{12} \big)\ket{\psi^+}_{34}\Big)\Big(b_1 \ket{f_{01}} - c_2 \ket{g_{10}}\Big)
        \end{bmatrix}.
    \end{align}

\end{itemize}

%RECHECK THIS EQT

As mentioned before, Charlie asks Alice and Bob to disclose their measurements results at specific positions via the public authenticated channels. She then compares them with her own and see if they are consistent with the expected results shown in Table \ref{table 1}. In order for Eve to avoid detection in the last three cases, she must set the incorrect terms in Eqs. (\ref{c_2_prime}–\ref{c_4_prime}) to zero vectors. More precisely, for each of the last three cases, Eve must satisfy the following conditions, respectively:

\begin{align}
    a_0 \ket{e_{00}} &= b_1 \ket{f_{01}}, \\
    c_2 \ket{g_{10}} &= d_3 \ket{h_{11}}.
\end{align}

\begin{align}
     a_0 \ket{e_{00}} &= c_2 \ket{g_{10}}, \\
     b_1 \ket{f_{01}} &= d_3 \ket{h_{11}}.
\end{align}

\begin{align}
     a_0 \ket{e_{00}} &= d_3 \ket{h_{11}}, \\
     b_1 \ket{f_{01}} &= c_2 \ket{g_{10}}.
\end{align}
Combining those conditions results in:

\begin{align}
    a_0 \ket{e_{00}} = b_1 \ket{f_{01}} = c_2 \ket{g_{10}} = d_3 \ket{h_{11}}.
\end{align}
This relation contradicts the distinguishability of the states $\ket{e_{00}}$, $\ket{f_{01}}$, $\ket{g_{10}}$, and $\ket{h_{11}}$, unless Eve sets all coefficients $a_0$, $b_1$, $c_2$, and $d_3$ to zero. However, combining that with the first condition (\ref{a_1 =...=d_2= 0}) leads to a contradiction with Eqs. (\ref{norm a_i}-\ref{norm d_i}).

To satisfy Eqs. (\ref{norm a_i}-\ref{norm d_i}), Eve must set $a_0 = b_1 = c_2 = d_3 = 1$. As a result, she can no longer distinguish between the states $\ket{e_{00}}$, $\ket{f_{01}}$, $\ket{g_{10}}$, and $\ket{h_{11}}$. In the case where Eve remains undetected, her unitary operation $U_E$ acts as follows:

\begin{align}
    U_{E} (\ket{00} \otimes \ket{e^\prime}) &= \ket{00} \ket{e_{00}}, \\ 
    U_{E} (\ket{01} \otimes \ket{e^\prime}) &= \ket{01} \ket{e_{00}}, \\
    U_{E} (\ket{10} \otimes \ket{e^\prime}) &= \ket{10} \ket{e_{00}}, \\
    U_{E} (\ket{11} \otimes \ket{e^\prime}) &= \ket{11} \ket{e_{00}}.
\end{align}

We observe that Eve’s ancilla becomes independent of the qubits being attacked. Consequently, she cannot extract any information about the participants' secret keys by measuring her ancillary particles. On the contrary, if Eve attempts to make the ancillary states distinguishable, her interference will be detected with nonzero probability, as the participants’ measurement results would then deviate from those expected in Table \ref{table 1}.
    
\end{proof}

\subsection{Key rate}

In this subsection, we evaluate a bound for the key rate in the asymptotic scenario (i.e., Charlie prepares an infinite number of four-particle cluster states). To achieve that, we use the proof method proposed by Krawec in \cite{Krawec2016a}. Since the key distribution process between Charlie and each of the participants is identical, we only focus on analyzing the key rate between Charlie and Alice. The key rate is defined as follows:

\begin{align}
    r = \lim_{n\rightarrow{+\infty}} \cfrac{l(n)}{n},
\end{align}
where $n$ in the length of the raw key and $l(n)$ is the length of final secret key after Charlie and Alice performed error correction and privacy amplification. If $r > 0$, then Charlie and Alice can distill a secure secret key.

Assuming collective attacks, it was shown in \cite{Renner2007,Devetak2005} that the key rate can be expressed as follows: 

\begin{align}
    r \geq \inf\big(S(A|E) - H(A|C)\big).
\end{align}

The computation of $H(A|C)$ is usually trivial and the real challenge is to compute $S(A|E)$. In the remaining of the subsection, we assume all quantum channels to be ideal and that all the observed noise comes from Eve's attack.

\subsubsection{Attack strategy}

Let $V_E$ be the unitary operation that models Eve's attack and $\ket{\chi}$ be some arbitrary, normalized state that represents the initial state of Eve's ancilla. The action the $V_E$ is defined as follows:

\begin{align}
    V_E \ket{0}\ket{\chi} &= \ket{0}\ket{e_0} + \ket{1}\ket{e_1}, \\
    V_E \ket{1}\ket{\chi} &= \ket{0}\ket{e_2} + \ket{1}\ket{e_3},
\end{align}
where the states $\ket{e_i}$ satisfy the following relation:

\begin{align}
    \braket{e_0} + \braket{e_1} = \braket{e_2} + \braket{e_3} &= 1, \\
    \braket{e_0}{e_2} = \braket{e_1}{e_3} &= 0
\end{align}

Eve launches her attack on each qubit sent by Charlie to Alice. After her attack, the system becomes:

\begin{align} \label{v_e on C}
    V_E \ket{C}_{1234} \ket{\chi} =  \frac{1}{2} \begin{bmatrix}
                         \ket{0}_1 \begin{pmatrix}
                             \ket{e_0}(\ket{000}_{234} + \ket{110}_{234}) \\
                             \ket{e_2}(\ket{001}_{234} - \ket{111}_{234})
                         \end{pmatrix} \\
                         \ket{1}_1 \begin{pmatrix}
                             \ket{e_1}(\ket{000}_{234} + \ket{110}_{234}) \\
                             \ket{e_3}(\ket{001}_{234} - \ket{111}_{234})
                         \end{pmatrix}
                 \end{bmatrix}.
\end{align}

We now construct the density operator describing the quantum system shared between the users and Eve, conditioning on the cases that are used to construct the raw key $R_{CA}$ (i.e., when Alice apply the identity operation on her qubit). Therefore we have the following two situations:

\begin{enumerate}
    \item \textbf{Situation 01:} Both Alice and Bob apply the identity operation. Knowing that Charlie also apply the same operation on her qubits. The state (\ref{v_e on C}) can then be written as follows:

    \begin{align}
     \frac{1}{2} \begin{bmatrix}
                         ( \ket{00}_{14}\ket{00}_{23} + \ket{00}_{14}\ket{11}_{23})\otimes\ket{e_0} \\
                         +( \ket{01}_{14}\ket{00}_{23} - \ket{01}_{14}\ket{11}_{23})\otimes\ket{e_2} \\
                         +( \ket{10}_{14}\ket{00}_{23} + \ket{10}_{14}\ket{11}_{23})\otimes\ket{e_1} \\
                         +( \ket{11}_{14}\ket{00}_{23} - \ket{11}_{14}\ket{11}_{23})\otimes\ket{e_3}
                 \end{bmatrix}.
\end{align}

Upon measurement, the system can be describe by the following mixed state:

\begin{align}
    \rho_{ABCE}^{(1)} &= \frac{1}{4} \ketbra{00}{00}_{14} \otimes \big( \ketbra{00}{00}_{23} + \ketbra{11}{11}_{23} \big) \otimes \ketbra{e_0}{e_0}
     \nonumber  \\ 
    &\quad +\frac{1}{4} \ketbra{01}{01}_{14} \otimes \big( \ketbra{00}{00}_{23} + \ketbra{11}{11}_{23} \big) \otimes \ketbra{e_2}{e_2} \nonumber  \\
    &\quad +\frac{1}{4} \ketbra{10}{10}_{14} \otimes \big( \ketbra{00}{00}_{23} + \ketbra{11}{11}_{23} \big) \otimes \ketbra{e_1}{e_1} \nonumber  \\ 
    &\quad +\frac{1}{4} \ketbra{11}{11}_{14} \otimes \big( \ketbra{00}{00}_{23} + \ketbra{11}{11}_{23} \big) \otimes \ketbra{e_3}{e_3}.   
\end{align}

     \item \textbf{Situation 02:} Alice applies the identity, while Bob applies the Hadamard operation. In this situation Charlie will apply $H$ on $s_3^i$ and $I$ on $s_4^i$. Based on (\ref{h on phi+}) and (\ref{h on phi-}), The state shown in Eq. (\ref{v_e on C}) evolves in the following manner:

 \begin{align}
     \frac{1}{2} \begin{bmatrix}
                         ( \ket{00}_{14}\ket{00}_{23} + \ket{00}_{14}\ket{11}_{23})\otimes\ket{e_0} \\
                         +( \ket{01}_{14}\ket{01}_{23} - \ket{01}_{14}\ket{10}_{23})\otimes\ket{e_2} \\
                         +( \ket{10}_{14}\ket{00}_{23} + \ket{10}_{14}\ket{11}_{23})\otimes\ket{e_1} \\
                         +( \ket{11}_{14}\ket{01}_{23} - \ket{11}_{14}\ket{10}_{23})\otimes\ket{e_3}
                 \end{bmatrix}.
\end{align}

Upon measurement by all users, the mixed state of system can be expressed as:

\begin{align}
    \rho_{ABCE}^{(2)} &= \frac{1}{4} \ketbra{00}{00}_{14} \otimes \big( \ketbra{00}{00}_{23} + \ketbra{11}{11}_{23} \big) \otimes \ketbra{e_0}{e_0}
     \nonumber  \\ 
    &\quad +\frac{1}{4} \ketbra{01}{01}_{14} \otimes \big( \ketbra{01}{01}_{23} + \ketbra{10}{10}_{23} \big) \otimes \ketbra{e_2}{e_2} \nonumber  \\
    &\quad +\frac{1}{4} \ketbra{10}{10}_{14} \otimes \big( \ketbra{00}{00}_{23} + \ketbra{11}{11}_{23} \big) \otimes \ketbra{e_1}{e_1} \nonumber  \\ 
    &\quad +\frac{1}{4} \ketbra{11}{11}_{14} \otimes \big( \ketbra{01}{01}_{23} + \ketbra{10}{10}_{23} \big) \otimes \ketbra{e_3}{e_3}.    
\end{align}

\end{enumerate}

Let us define $p_{i,j}$ as the probability that Alice measures the state $\ket{i}$ on qubit $s_1^i$, and Charlie measures $\ket{j}$ on qubit $s_4^i$. It is easy to see that both scenarios yield the same probabilities, given by:

\begin{align} \label{p_00 p_01}
    p_{0,0} &= \frac{\braket{e_0}}{2}, \quad  p_{0,1} = \frac{\braket{e_2}}{2},  \\ \label{p_10 p_11}
    p_{1,0} &= \frac{\braket{e_1}}{2}, \quad  p_{1,1} = \frac{\braket{e_3}}{2}.
\end{align}

Note that those probabilities are statistics that Charlie can observe. Moreover, it is evident that tracing out $B$ and $C$ from both $\rho_{ABCE}^{(1)}$ and $\rho_{ABCE}^{(2)}$ gives rise to the same mixed state shared between Alice and Eve, which is:

\begin{align}
    \rho_{AE} &= \frac{1}{2} \ketbra{0}{0}_1 \otimes (\ketbra{e_0}{e_0} + \ketbra{e_2}{e_2})  \\
              & \quad+ \frac{1}{2} \ketbra{1}{1}_1 \otimes (\ketbra{e_1}{e_1} + \ketbra{e_3}{e_3}).
\end{align}

Therefore, both situations will result in the same key rate.

\subsubsection{Key rate evaluation}

In order to obtain the expression of the key rate in the asymptotic scenario, we first need to calculate $S(A|E)$ and $H(A|C))$. The calculation of the latter is trivial, given (\ref{p_00 p_01}-\ref{p_10 p_11}), we have:

\begin{align} \label{H(A|C)}
    H(A|C) &= H(AC) - H(C), \\
           &= H(p_{0,0}, p_{0,1}, p_{1,0}, p_{1,1})  - H(p_{0,0} + p_{1,0}, p_{0,1} + p_{1,1}).
\end{align}

To lower bound $S(A|E)$, we shall follow the technique outlined in \cite{Krawec2016a}, which is to introduce an auxiliary system $F$ to simplify the calculation. According to the strong sub-additivity of the von Neumann entropy, we have $S(A|E) \geq S(A|EF)$. Thus, our new lower bound for $r$ becomes:

\begin{align} \label{r}
    r \geq S(A|E) - H(A|C) \geq S(A|EF) - H(A|C).
\end{align}

This newly introduced system is spanned by the basis $\{\ket{C}, \ket{W}\}$, where $\ket{C}$ denotes that the measurement results of Alice and Charlie on $s_1^i$ and $s_4^i$, respectively, are correlated, and $\ket{W}$ denotes that their results are anti-correlated. Thus, the new mixed state can be expressed in the following manner:

\begin{align}
    \rho_{AEF} &= \frac{1}{2} \ketbra{0}{0}_1 \otimes \big(\ketbra{e_0}{e_0} \otimes \ketbra{C}{C} + \ketbra{e_2}{e_2} \otimes \ketbra{W}{W}\big)  \\
              & \quad+ \frac{1}{2} \ketbra{1}{1}_1 \otimes \big(\ketbra{e_1}{e_1} \otimes \ketbra{W}{W} + \ketbra{e_3}{e_3} \otimes \ketbra{C}{C}\big).
\end{align}

This state can be written as a diagonal matrix with elements $p_{i,j}$. Therefore:

\begin{align} \label{S(AEF)}
    S(AEF) = H(p_{0,0}, p_{0,1}, p_{1,0}, p_{1,1}).
\end{align}

Next, we need to calculate $S(EF)$. By tracing out the system of Alice, we obtain:

\begin{align}
    \rho_{EF} &= \frac{1}{2} \big( \ketbra{e_0}{e_0} + \ketbra{e_3}{e_3} \big) \otimes \ketbra{C}{C} \\ 
              & \quad +\frac{1}{2} \big( \ketbra{e_1}{e_1} + \ketbra{e_2}{e_2} \big) \otimes \ketbra{W}{W}.
\end{align}

This state can be written in the following manner:

\begin{align} \label{rho_ef}
    \rho_{EF} = \xi_1 \sigma_1 \otimes \ketbra{C}{C} + \xi_2 \sigma_2 \otimes \ketbra{W}{W},
\end{align}
where

\begin{align}
    \xi_1 &= \frac{1}{2} \big( \braket{e_0} + \braket{e_3} \big) = p_{0,0} + p_{1,1}, \\
    \xi_2 &= \frac{1}{2} \big( \braket{e_2} + \braket{e_1} \big) = p_{0,1} + p_{1,0},
\end{align}
and

\begin{align}
    \sigma_1 &= \cfrac{\ketbra{e_0}{e_0} + \ketbra{e_3}{e_3}}{\braket{e_0}{e_0} + \braket{e_3}{e_3}}, \\
    \sigma_2 &= \cfrac{\ketbra{e_2}{e_2} + \ketbra{e_1}{e_1}}{\braket{e_2}{e_2} + \braket{e_1}{e_1}}.
\end{align}

According to the joint entropy theorem, the von Neumann entropy of (\ref{rho_ef}) can be written as:

\begin{align}
    S(EF) = H(p_{0,0} + p_{1,1}, p_{0,1} + p_{1,0}) + (p_{0,0} + p_{1,1})S(\sigma_1) + (p_{0,1} + p_{1,0})S(\sigma_2).
\end{align}

Notice that both $\sigma_1$ and $\sigma_2$ are two-dimensional systems, where $S(\sigma_1) \leq 1$ and $S(\sigma_2) \leq 1$. Therefore, we can upper bound $S(EF)$ in the following way:

\begin{align} \label{S(EF)}
    S(EF) \leq H(p_{0,0} + p_{1,1}, p_{0,1} + p_{1,0}) + (p_{0,0} + p_{1,1})S(\sigma_1) + (p_{0,1} + p_{1,0}).
\end{align}

Now, we need to calculate the eigenvalues of $\sigma_1$. Without loss of generality, we can define 

\begin{align}
    \ket{e_0} &= \alpha \ket{u}, \\
    \ket{e_3} &= \beta \ket{u} + \gamma \ket{v},
\end{align}
where $\alpha, \beta, \gamma \in \mathbb{C}$, $\braket{u} = \braket{v} = 1$, and $\braket{u}{v} = 0$. We also have the following identities:

\begin{align}
    \braket{e_0} &= \abs{\alpha}^2 = 2p_{0,0}, \\
    \braket{e_3} &= \abs{\beta}^2 + \abs{\gamma}^2 = 2p_{1,1}. 
\end{align}

If we express $\sigma_1$ in the basis $\{\ket{u}, \ket{v}\}$, we obtain the following matrix:

\begin{align} 
    \sigma_1 = \frac{1}{2\xi_1} \begin{pmatrix}
        \abs{\alpha}^2 + \abs{\beta}^2 & \beta \gamma^* \\
        \beta^{*}\gamma & \abs{\gamma}^2
     \end{pmatrix}
\end{align}

The eigenvalues of this matrix are:

\begin{align}
    \lambda_{\pm} = \frac{1}{2} \pm \cfrac{\sqrt{(p_{0,0} - p_{1,1})^{2} + \abs{\braket{e_0}{e_3}}^{2}}}{2(p_{0,0} + p_{1,1})}
\end{align}

Thus, we have

\begin{align} \label{sigma_1}
    S(\sigma_1) = -\lambda_+ \log \lambda_+ - \lambda_- \log \lambda_- = h(\lambda_+).
\end{align}

 The final step to complete our proof is to find an upper bound on $h(\lambda_+)$. Since $\lambda_+ \geq \frac{1}{2}$ and $h(\lambda_+)$ decreases as $\lambda_+$ increases (in particular, as $\abs{\braket{e_0}{e_3}}^2$ increases), this reduces to finding a lower bound on $\abs{\braket{e_0}{e_3}}^2$. Let's define $\tilde{\lambda}$ as

 \begin{align} \label{lambda_tilde}
     \tilde{\lambda} = \frac{1}{2} \pm \cfrac{\sqrt{(p_{0,0} - p_{1,1})^{2} + B}}{2(p_{0,0} + p_{1,1})}
 \end{align}
such as $\lambda_+ \geq \tilde{\lambda} \implies h(\lambda_+) \leq h(\tilde{\lambda})$, and where $B$ is the lower bound of $\abs{\braket{e_0}{e_3}}^2$.

To evaluate $B$, we can use the case where Alice and Bob apply $H$ and $I$ respectively. We know that in this situation, Charlie applies the same operation as Alice on $s_4^i$ and the same operation as Bob on $s_3^i$. Therefore, the state (\ref{v_e on C}) evolves and can be written in the following way:

\begin{align} \label{v_e 3rd case}
   \frac{1}{4} \begin{bmatrix}
        \big(\ket{00}_{14}\ket{00}_{23} + \ket{01}_{14}\ket{11}_{23} \big) \ket{f_0} \\
        \big(\ket{01}_{14}\ket{00}_{23} + \ket{00}_{14}\ket{11}_{23} \big) \ket{f_1} \\
        \big(\ket{10}_{14}\ket{00}_{23} + \ket{11}_{14}\ket{11}_{23} \big) \ket{f_2} \\
        \big(\ket{11}_{14}\ket{00}_{23} + \ket{10}_{14}\ket{11}_{23} \big) \ket{f_3}
    \end{bmatrix},
\end{align}
where

\begin{align} \label{f_0}
    \ket{f_0} &= \ket{e_0} + \ket{e_1} + \ket{e_2} + \ket{e_3}, \\
    \ket{f_1} &= \ket{e_0} + \ket{e_1} - \ket{e_2} - \ket{e_3}, \\
    \ket{f_2} &= \ket{e_0} - \ket{e_1} + \ket{e_2} - \ket{e_3}, \\ \label{f_3}
    \ket{f_3} &= \ket{e_0} - \ket{e_1} - \ket{e_2} + \ket{e_3}.
\end{align}

Charlie and the participants measure their respective qubits in the $Z$ basis. Let $p_c$ denote the probability that their measurement outcomes are consistent with the expected results shown in Table \ref{table 1}, and let $p_{in}$ denote the probability that the outcomes are inconsistent with those expected results. From (\ref{v_e 3rd case}), and (\ref{f_0}-\ref{f_3}), we can easily get:

\begin{align} \label{p_c}
    p_c &= \frac{1}{8} \big( \braket{f_0} + \braket{f_3} \big), \nonumber \\
        &= \frac{1}{2} \big( 1 + Re\braket{e_0}{e_3} + Re\braket{e_1}{e_2}  \big). 
\end{align}

\begin{align} \label{p_in}
    p_{in} &= \frac{1}{8} \big( \braket{f_1} + \braket{f_2} \big), \nonumber \\
        &= \frac{1}{2} \big( 1 - Re\braket{e_0}{e_3} - Re\braket{e_1}{e_2}  \big). 
\end{align}

By subtracting those two probabilities, we obtain:

\begin{align}
    Re\braket{e_0}{e_3} = (p_c - p_{in}) - Re\braket{e_1}{e_2}.
\end{align}

By using the Cauchy-Schwarz inequality $\abs{Re\braket{x}{y}} \leq \abs{\braket{x}{y}} \leq \sqrt{\braket{x}{x}\braket{y}}$, we can obtain a lower bound for the above quantity, which is:

\begin{align}
    Re\braket{e_0}{e_3} \geq (p_c - p_{in}) - 2\sqrt{p_{0,1}p_{1,0}}.
\end{align}

Since $\abs{\braket{e_0}{e_3}} \geq \abs{Re\braket{e_0}{e_3}}$, our final bound can be written as:

\begin{align} \label{bound}
    \abs{\braket{e_0}{e_3}}^2 \geq \abs{(p_c - p_{in}) - 2\sqrt{p_{0,1}p_{1,0}}}^2 := B.
\end{align}

By replacing the expression of (\ref{H(A|C)}), (\ref{S(AEF)}), and (\ref{S(EF)}) in (\ref{r}), the key rate can be expressed as follows:

\begin{align} \label{r_final}
    r &\geq H(p_{0,0}+p_{1,0}, p_{0,1}+p_{1,1}) - H(p_{0,0}+p_{1,1}, p_{0,1}+p_{1,0}) \nonumber \\
    & \quad- (p_{0,0}+p_{1,1})h(\tilde{\lambda}) - (p_{0,1}+p_{1,0})
\end{align}

Notice that all the variables inside $r$ are statistics that Charlie can observe. Therefore, the value of the key rate can be derived. 

It is a common assumption in QKD literature to model Eve's attack as a depolarization channel with parameter $Q$. This parameter represents the error rate introduced by Eve's attack and can be interpreted as the probability that Alice and Charlie obtain anti-correlated results on their measurements on $s_1^i$ and $s_4^i$, respectively. We also emphasis that $Q$ can easily be observed by Charlie. In this scenario, our probabilities (\ref{p_00 p_01}), (\ref{p_10 p_11}), (\ref{p_c}) and (\ref{p_in}) can be expressed as follows:

\begin{align} \label{p_ij Q}
    p_{0,0} = p_{1,1} = \frac{(1-Q)}{2}, \quad\quad p_{0,1} = p_{1,0} = \frac{Q}{2}.
\end{align}

\begin{align} \label{p_c_in Q}
    p_c &= (1-Q), \quad\quad p_{in} = Q.
\end{align}

By substituting the above expressions into Eq. (\ref{r_final}), the key rate can be expressed solely as a function of the error rate $Q$. As shown in Figure \ref{key_rate_fig}, Charlie and Alice can distill a secure shared key as long as $Q \leq 9.68\%$. This value represents the noise tolerance of our protocol, which is close to the $11\%$ threshold of the BB84 protocol.

\begin{figure}[h]
    \centering
    \includegraphics[width=0.7\textwidth]{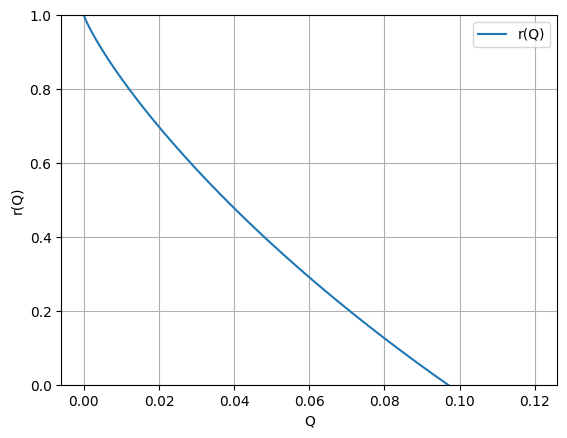}  
    \caption{Key rate of the proposed scheme as a function of $Q$.}
    \label{key_rate_fig}
\end{figure}

\section{Discussion and comparison}\label{sec5_comp}

In this section, we compare the proposed protocol with some existing multi-party SQKD schemes \cite{XianZhou2009,Zhou2019,Tian2022,Ye2023,Tsai2024} in the three-party scenario. The comparison is drawn from six perspectives: quantum resources, quantum capabilities of the participants, communication structure, whether the protocol can mitigate Trojan horse attacks without additional costly devices, noise tolerance, and qubit efficiency. The latter is calculated using the following formula:

\begin{align}
    \eta = \frac{c}{(q+b)},
\end{align}
where $c$ is the length of the raw key, $q$ is the total number of qubits generated by the quantum user, and $b$ is the total number of qubits generated by the classical participants. Since the proposed protocol is a one-way communication scheme, the classical participants do not need to generate any qubits to send back to Charlie. Thus, $b=0$. Under the ideal situation, that is $\epsilon = 0$, Charlie generates $4n$ four-particle cluster states, which means $q = 16n$. At the end of the protocol, both Alice and Bob obtain a raw key $R_{CA}$ and $R_{CB}$, respectively, each of length $n$. Therefore, the qubit efficiency of our protocol is $\eta = 1/8$.

In terms of quantum resources, Zhang et al.'s protocol uses single qubits, making it more practical to implement compared to other protocols that rely on entangled states. However, this scheme suffers from poor qubit efficiency. The same issue applies to other protocols employing a circular communication structure (i.e., those by Tian et al. and Ye et al.). This is due to fact that all participants must  choose the “correct” operation (i.e. measure-and-resend) in order to establish a shared secret key. Tsai et al.'s protocol also exhibits low efficiency, primarily due to the use of quantum repeaters, which negatively impact qubit efficiency. In contrast, our proposed protocol achieves the highest qubit efficiency among all the schemes listed in Table \ref{table 2}.

Regarding quantum capabilities, the proposed protocol and that of Tsai et al. are more lightweight than existing protocols, as classical users require only two quantum capabilities, compared to the three needed in other schemes. Additionally, the adoption of a one-way communication structure offers several advantages. First, the proposed protocol benefits from shorter qubit transmission distances than protocols employing circular or two-way communication, resulting in a lower cost for qubit transmission. Second, other protocols must incorporate additional, costly devices or mechanisms to defend against quantum Trojan horse attacks. This not only increases the quantum burden on classical participants and thus contradicting the very premise of the semi-quantum model, but also reduces qubit efficiency. In contrast, the proposed protocol is inherently immune to such attacks, as classical participants are not required to use any additional devices.

In terms of noise tolerance, our protocol achieves a threshold of $9.68\%$, the highest among the values listed in Table \ref{table 2}, closely followed by Ye et al.'s protocol. However, due to the absence of necessary parameters in some existing protocols, a complete quantitative comparison is not feasible.

Overall, the advantages of our protocol over existing schemes can be summarized as follows:

\begin{enumerate}
    \item Classical participants require only two quantum capabilities, rather than three.
    \item The protocol is inherently robust against Trojan horse attacks, without the need for additional costly devices or mechanisms on the part of the classical participants.
    \item It benefits from a shorter qubit transmission distance compared to certain other schemes.
    \item The protocol exhibits superior qubit efficiency.
    \item It offers higher noise tolerance than several existing protocols.
\end{enumerate}

Two important observations should be made before concluding the comparison. First, among all protocols considered, only our scheme and that of Zhou et al. allow the quantum user to share a distinct secret key with each classical participant. In contrast, the other protocols are designed to establish a common secret key shared by all participants. Second, the comparison presented here focuses on the three-party scenario. Extending the proposed protocol to accommodate $m$ participants using a $2m$-particle cluster state is an issue left for future work.

\begin{table}[h]
\caption{\label{table 2}Comparison of proposed protocol with other MSQKD protocols.}
\resizebox{\textwidth}{!}{%
\begin{tabular}{lcccccc}
\toprule
\textbf{} & \makecell{Quantum \\ resources} & \makecell{Capabilities of \\ classical participants} & \makecell{Communication \\ structure}  & \makecell{Mitigate THA\\ without device} & \makecell{Key \\ rate} & \makecell{Qubit \\ efficiency} \\
\midrule
Zhang et al. \cite{XianZhou2009} & Single photons & \multirow{4}{*}{\makecell[l]{1. Generate $\{\ket{0}, \ket{1}\}$\\2. Measure $\{\ket{0}, \ket{1}\}$\\3. Reflect}} & Circular & No & -- & $\cfrac{1}{32}$ \\

Zhou et al. \cite{Zhou2019} & Cluster states &  & Two-way & No & $2.82\%$ & $\cfrac{1}{10}$ \\

Tian et al. \cite{Tian2022} & \makecell{Hyperentangled\\Bell states} &  & Circular & No & -- & $\cfrac{1}{24}$ \\

Ye et al. \cite{Ye2023} & Bell states &  & Circular & No & $9.53\%$ & $\cfrac{1}{24}$ \\

Tsai et al. \cite{Tsai2024} & Graph state & \makecell[l]{1. Perform Pauli and $H$ operations\\2. Measure $\{\ket{0}, \ket{1}\}$} & One-way & Yes & -- & $\cfrac{1}{8(3 + r^*)}$ \\

Our scheme & Cluster states & \makecell[l]{1. Perform $H$\\2. Measure $\{\ket{0}, \ket{1}\}$} & One-way & Yes & $9.68\%$ & $\cfrac{1}{8}$ \\
\bottomrule
\end{tabular}
}
\noindent\footnotesize
($*$) $r$ is the number of quantum repeaters.
\end{table}

\section{Conclusion} \label{sec6_conc}

This paper introduced a novel lightweight quantum key distribution (LQKD) protocol that enables a quantum-capable user, Charlie, to simultaneously establish two separate secret keys, one with classical Alice and another with classical Bob. This multi-user setting offers better time and qubit efficiency compared to traditional two-party schemes. The protocol leverages the four-particle cluster state, whose properties make it a more suitable quantum resource for multi-party QKD protocols than other entangled states, such as GHZ-states.

By adopting a one-way qubit transmission method, the protocol effectively addresses several challenges in existing SQKD schemes:
(1) it significantly reduces the quantum overhead for classical users, who require only two quantum capabilities (applying the Hadamard operation and performing $Z$-basis measurements), without the need for additional costly devices to counter quantum Trojan horse attacks; (2) it benefits from shorter qubit transmission distances than protocols based on circular or two-way communication, thus lowering the cost of quantum transmission; and (3) it achieves higher time and qubit efficiency. As a result, the proposed protocol is both more lightweight and more practical than existing SQKD protocols. 

The protocol leverages the relationship between Bell states and the Hadamard operation to detect eavesdroppers, and the security analysis shows that the schemes is secure against both internal and external adversaries in the ideal setting. Furthermore, this study provides a detailed proof of the protocol’s unconditional security. Specifically, a lower bound on the key rate and noise tolerance is derived in the asymptotic scenario, based on observable parameters in the channel. It is shown that users can extract a secure secret key as long as the noise remains below $9.68\%$, a threshold close to the BB84 protocol’s $11\%$.

In our future work, it would be interesting to extend the protocol to multi-user scenarios and to derive the key rate expression in the finite-key setting.

\bibliographystyle{plainnat}
 \bibliography{cluster_ref}

\begin{thebibliography}{58}
\providecommand{\natexlab}[1]{#1}
\providecommand{\url}[1]{\texttt{#1}}
\expandafter\ifx\csname urlstyle\endcsname\relax
  \providecommand{\doi}[1]{doi: #1}\else
  \providecommand{\doi}{doi: \begingroup \urlstyle{rm}\Url}\fi

\bibitem[Barrett et~al.()Barrett, Colbeck, and Kent]{Barrett2012}
Jonathan Barrett, Roger Colbeck, and Adrian Kent.
\newblock Unconditionally secure device-independent quantum key distribution with only two devices.
\newblock 86\penalty0 (6):\penalty0 062326.
\newblock ISSN 1094-1622.
\newblock \doi{10.1103/physreva.86.062326}.

\bibitem[Bechmann-Pasquinucci and Gisin()]{BechmannPasquinucci1999}
H.~Bechmann-Pasquinucci and N.~Gisin.
\newblock Incoherent and coherent eavesdropping in the six-state protocol of quantum cryptography.
\newblock 59\penalty0 (6):\penalty0 4238--4248.
\newblock ISSN 1094-1622.
\newblock \doi{10.1103/physreva.59.4238}.

\bibitem[Bennett and Brassard()]{Bennett2014}
Charles~H. Bennett and Gilles Brassard.
\newblock Quantum cryptography: Public key distribution and coin tossing.
\newblock 560:\penalty0 7--11.
\newblock ISSN 0304-3975.
\newblock \doi{10.1016/j.tcs.2014.05.025}.

\bibitem[Bennett et~al.()Bennett, Brassard, and Mermin]{Bennett1992}
Charles~H. Bennett, Gilles Brassard, and N.~David Mermin.
\newblock Quantum cryptography without bell’s theorem.
\newblock 68\penalty0 (5):\penalty0 557--559.
\newblock ISSN 0031-9007.
\newblock \doi{10.1103/physrevlett.68.557}.

\bibitem[Boyer et~al.({\natexlab{a}})Boyer, Gelles, Kenigsberg, and Mor]{Boyer2009}
Michel Boyer, Ran Gelles, Dan Kenigsberg, and Tal Mor.
\newblock Semiquantum key distribution.
\newblock 79\penalty0 (3):\penalty0 032341, {\natexlab{a}}.
\newblock ISSN 1094-1622.
\newblock \doi{10.1103/physreva.79.032341}.

\bibitem[Boyer et~al.({\natexlab{b}})Boyer, Katz, Liss, and Mor]{Boyer2017}
Michel Boyer, Matty Katz, Rotem Liss, and Tal Mor.
\newblock Experimentally feasible protocol for semiquantum key distribution.
\newblock 96\penalty0 (6):\penalty0 062335, {\natexlab{b}}.
\newblock ISSN 2469-9934.
\newblock \doi{10.1103/physreva.96.062335}.

\bibitem[Boyer et~al.({\natexlab{c}})Boyer, Kenigsberg, and Mor]{Boyer2007}
Michel Boyer, Dan Kenigsberg, and Tal Mor.
\newblock Quantum key distribution with classical bob.
\newblock In \emph{2007 First International Conference on Quantum, Nano, and Micro Technologies (ICQNM’07)}, pages 10--10. IEEE, {\natexlab{c}}.
\newblock \doi{10.1109/icqnm.2007.18}.

\bibitem[Briegel and Raussendorf()]{Briegel2001}
Hans~J. Briegel and Robert Raussendorf.
\newblock Persistent entanglement in arrays of interacting particles.
\newblock 86\penalty0 (5):\penalty0 910--913.
\newblock ISSN 1079-7114.
\newblock \doi{10.1103/physrevlett.86.910}.

\bibitem[Cai()]{Cai2006}
Qing-Yu Cai.
\newblock Eavesdropping on the two-way quantum communication protocols with invisible photons.
\newblock 351\penalty0 (1–2):\penalty0 23--25.
\newblock ISSN 0375-9601.
\newblock \doi{10.1016/j.physleta.2005.10.050}.

\bibitem[Chen et~al.()Chen, Li, Liu, Peng, and Yu]{Chen2021}
Lingli Chen, Qin Li, Chengdong Liu, Yu~Peng, and Fang Yu.
\newblock Efficient mediated semi-quantum key distribution.
\newblock 582:\penalty0 126265.
\newblock ISSN 0378-4371.
\newblock \doi{10.1016/j.physa.2021.126265}.

\bibitem[Christandl et~al.()Christandl, König, and Renner]{Christandl2009}
Matthias Christandl, Robert König, and Renato Renner.
\newblock Postselection technique for quantum channels with applications to quantum cryptography.
\newblock 102\penalty0 (2):\penalty0 020504.
\newblock ISSN 1079-7114.
\newblock \doi{10.1103/physrevlett.102.020504}.

\bibitem[Deng et~al.()Deng, Li, Zhou, and Zhang]{Deng2005}
Fu-Guo Deng, Xi-Han Li, Hong-Yu Zhou, and Zhan-jun Zhang.
\newblock Improving the security of multiparty quantum secret sharing against trojan horse attack.
\newblock 72\penalty0 (4):\penalty0 044302.
\newblock ISSN 1094-1622.
\newblock \doi{10.1103/physreva.72.044302}.

\bibitem[Devetak and Winter()]{Devetak2005}
Igor Devetak and Andreas Winter.
\newblock Distillation of secret key and entanglement from quantum states.
\newblock 461\penalty0 (2053):\penalty0 207--235.
\newblock ISSN 1471-2946.
\newblock \doi{10.1098/rspa.2004.1372}.

\bibitem[Dür and Briegel()]{Duer2004}
W.~Dür and H.-J. Briegel.
\newblock Stability of macroscopic entanglement under decoherence.
\newblock 92\penalty0 (18):\penalty0 180403.
\newblock ISSN 1079-7114.
\newblock \doi{10.1103/physrevlett.92.180403}.

\bibitem[Ekert()]{Ekert1991}
Artur~K. Ekert.
\newblock Quantum cryptography based on bell’s theorem.
\newblock 67\penalty0 (6):\penalty0 661--663.
\newblock ISSN 0031-9007.
\newblock \doi{10.1103/physrevlett.67.661}.

\bibitem[Epping et~al.()Epping, Kampermann, macchiavello, and Bruß]{Epping2017}
Michael Epping, Hermann Kampermann, Chiara macchiavello, and Dagmar Bruß.
\newblock Multi-partite entanglement can speed up quantum key distribution in networks.
\newblock 19\penalty0 (9):\penalty0 093012.
\newblock ISSN 1367-2630.
\newblock \doi{10.1088/1367-2630/aa8487}.

\bibitem[Grosshans and Grangier()]{Grosshans2002}
Frédéric Grosshans and Philippe Grangier.
\newblock Continuous variable quantum cryptography using coherent states.
\newblock 88\penalty0 (5):\penalty0 057902.
\newblock ISSN 1079-7114.
\newblock \doi{10.1103/physrevlett.88.057902}.

\bibitem[Guskind and Krawec()]{Guskind2022}
Julia Guskind and Walter~O Krawec.
\newblock Mediated semi-quantum key distribution with improved efficiency.
\newblock 7\penalty0 (3):\penalty0 035019.
\newblock ISSN 2058-9565.
\newblock \doi{10.1088/2058-9565/ac7412}.

\bibitem[Hajji and El~Baz()]{Hajji2021}
Hasnaa Hajji and Morad El~Baz.
\newblock Qutrit-based semi-quantum key distribution protocol.
\newblock 20\penalty0 (1).
\newblock ISSN 1573-1332.
\newblock \doi{10.1007/s11128-020-02927-8}.

\bibitem[He et~al.()He, Li, Wu, Chan, and Zhang]{He2018}
Jinjun He, Qin Li, Chunhui Wu, Wai~Hong Chan, and Shengyu Zhang.
\newblock Measurement-device-independent semiquantum key distribution.
\newblock 16\penalty0 (02):\penalty0 1850012.
\newblock ISSN 1793-6918.
\newblock \doi{10.1142/s0219749918500120}.

\bibitem[Hwang et~al.()Hwang, Chen, Tsai, Kuo, and Chen]{Hwang2023}
Tzonelih Hwang, Yen-Jie Chen, Chia-Wei Tsai, Cheng-Ching Kuo, and Ming-Jun Chen.
\newblock Lightweight mediated quantum key distribution.
\newblock 287:\penalty0 171125.
\newblock ISSN 0030-4026.
\newblock \doi{10.1016/j.ijleo.2023.171125}.

\bibitem[Hwang()]{Hwang2003}
Won-Young Hwang.
\newblock Quantum key distribution with high loss: Toward global secure communication.
\newblock 91\penalty0 (5):\penalty0 057901.
\newblock ISSN 1079-7114.
\newblock \doi{10.1103/physrevlett.91.057901}.

\bibitem[Iqbal and Krawec()]{Iqbal2020}
Hasan Iqbal and Walter~O. Krawec.
\newblock Semi-quantum cryptography.
\newblock 19\penalty0 (3).
\newblock ISSN 1573-1332.
\newblock \doi{10.1007/s11128-020-2595-9}.

\bibitem[Krawec({\natexlab{a}})]{Krawec2014}
Walter~O. Krawec.
\newblock Restricted attacks on semi-quantum key distribution protocols.
\newblock 13\penalty0 (11):\penalty0 2417--2436, {\natexlab{a}}.
\newblock ISSN 1573-1332.
\newblock \doi{10.1007/s11128-014-0802-2}.

\bibitem[Krawec({\natexlab{b}})]{Krawec2015}
Walter~O. Krawec.
\newblock Mediated semiquantum key distribution.
\newblock 91\penalty0 (3):\penalty0 032323, {\natexlab{b}}.
\newblock ISSN 1094-1622.
\newblock \doi{10.1103/physreva.91.032323}.

\bibitem[Krawec({\natexlab{c}})]{Krawec2015a}
Walter~O. Krawec.
\newblock Security proof of a semi-quantum key distribution protocol.
\newblock In \emph{2015 IEEE International Symposium on Information Theory (ISIT)}, pages 686--690. IEEE, {\natexlab{c}}.
\newblock \doi{10.1109/isit.2015.7282542}.

\bibitem[Krawec({\natexlab{d}})]{Krawec2016}
Walter~O. Krawec.
\newblock Security of a semi-quantum protocol where reflections contribute to the secret key.
\newblock 15\penalty0 (5):\penalty0 2067--2090, {\natexlab{d}}.
\newblock ISSN 1573-1332.
\newblock \doi{10.1007/s11128-016-1266-3}.

\bibitem[Krawec({\natexlab{e}})]{Krawec2016a}
Walter~O. Krawec.
\newblock Quantum key distribution with mismatched measurements over arbitrary channels.
\newblock {\natexlab{e}}.
\newblock \doi{10.48550/ARXIV.1608.07728}.

\bibitem[Lai et~al.()Lai, Pieprzyk, and Orgun]{Lai2020}
Hong Lai, Josef Pieprzyk, and Mehmet~A. Orgun.
\newblock Novel quantum key distribution with shift operations based on fibonacci and lucas valued orbital angular momentum entangled states.
\newblock 554:\penalty0 124694.
\newblock ISSN 0378-4371.
\newblock \doi{10.1016/j.physa.2020.124694}.

\bibitem[Leverrier and Grangier()]{Leverrier2009}
Anthony Leverrier and Philippe Grangier.
\newblock Unconditional security proof of long-distance continuous-variable quantum key distribution with discrete modulation.
\newblock 102\penalty0 (18):\penalty0 180504.
\newblock ISSN 1079-7114.
\newblock \doi{10.1103/physrevlett.102.180504}.

\bibitem[Li et~al.()Li, Chan, and Zhang]{Li2016}
Qin Li, Wai~Hong Chan, and Shengyu Zhang.
\newblock Semiquantum key distribution with secure delegated quantum computation.
\newblock 6\penalty0 (1).
\newblock ISSN 2045-2322.
\newblock \doi{10.1038/srep19898}.

\bibitem[Lin et~al.()Lin, Tsai, and Hwang]{Lin2019}
Po‐Hua Lin, Chia‐Wei Tsai, and Tzonelih Hwang.
\newblock Mediated semi‐quantum key distribution using single photons.
\newblock 531\penalty0 (8).
\newblock ISSN 1521-3889.
\newblock \doi{10.1002/andp.201800347}.

\bibitem[Liu and Hwang()]{Liu2018}
Zhi‐Rou Liu and Tzonelih Hwang.
\newblock Mediated semi‐quantum key distribution without invoking quantum measurement.
\newblock 530\penalty0 (4).
\newblock ISSN 1521-3889.
\newblock \doi{10.1002/andp.201700206}.

\bibitem[Lo et~al.({\natexlab{a}})Lo, Chau, and Ardehali]{Lo2004}
Hoi-Kwong Lo, H.F. Chau, and M.~Ardehali.
\newblock Efficient quantum key distribution scheme and a proof of its unconditional security.
\newblock 18\penalty0 (2):\penalty0 133--165, {\natexlab{a}}.
\newblock ISSN 1432-1378.
\newblock \doi{10.1007/s00145-004-0142-y}.

\bibitem[Lo et~al.({\natexlab{b}})Lo, Curty, and Qi]{Lo2012}
Hoi-Kwong Lo, Marcos Curty, and Bing Qi.
\newblock Measurement-device-independent quantum key distribution.
\newblock 108\penalty0 (13):\penalty0 130503, {\natexlab{b}}.
\newblock ISSN 1079-7114.
\newblock \doi{10.1103/physrevlett.108.130503}.

\bibitem[Long and Liu()]{Long2002}
G.~L. Long and X.~S. Liu.
\newblock Theoretically efficient high-capacity quantum-key-distribution scheme.
\newblock 65\penalty0 (3):\penalty0 032302.
\newblock ISSN 1094-1622.
\newblock \doi{10.1103/physreva.65.032302}.

\bibitem[Lu et~al.()Lu, Tsai, and Hwang]{Lu2020}
Yu‐Chin Lu, Chia‐Wei Tsai, and Tzonelih Hwang.
\newblock Collective attack and improvement on “mediated semi‐quantum key distribution using single photons”.
\newblock 532\penalty0 (9).
\newblock ISSN 1521-3889.
\newblock \doi{10.1002/andp.201900493}.

\bibitem[Lucamarini et~al.()Lucamarini, Di~Giuseppe, and Tamaki]{Lucamarini2009}
Marco Lucamarini, Giovanni Di~Giuseppe, and Kiyoshi Tamaki.
\newblock Robust unconditionally secure quantum key distribution with two nonorthogonal and uninformative states.
\newblock 80\penalty0 (3):\penalty0 032327.
\newblock ISSN 1094-1622.
\newblock \doi{10.1103/physreva.80.032327}.

\bibitem[Mayers and Yao()]{Mayers}
D.~Mayers and A.~Yao.
\newblock Quantum cryptography with imperfect apparatus.
\newblock In \emph{Proceedings 39th Annual Symposium on Foundations of Computer Science (Cat. No.98CB36280)}, SFCS-98, pages 503--509. IEEE Comput. Soc.
\newblock \doi{10.1109/sfcs.1998.743501}.

\bibitem[Murta et~al.()Murta, Grasselli, Kampermann, and Bruß]{Murta2020}
Gláucia Murta, Federico Grasselli, Hermann Kampermann, and Dagmar Bruß.
\newblock Quantum conference key agreement: A review.
\newblock 3\penalty0 (11).
\newblock ISSN 2511-9044.
\newblock \doi{10.1002/qute.202000025}.

\bibitem[Renner()]{Renner2007}
Renato Renner.
\newblock Symmetry of large physical systems implies independence of subsystems.
\newblock 3\penalty0 (9):\penalty0 645--649.
\newblock ISSN 1745-2481.
\newblock \doi{10.1038/nphys684}.

\bibitem[Scarani et~al.()Scarani, Acín, Ribordy, and Gisin]{Scarani2004}
Valerio Scarani, Antonio Acín, Grégoire Ribordy, and Nicolas Gisin.
\newblock Quantum cryptography protocols robust against photon number splitting attacks for weak laser pulse implementations.
\newblock 92\penalty0 (5):\penalty0 057901.
\newblock ISSN 1079-7114.
\newblock \doi{10.1103/physrevlett.92.057901}.

\bibitem[Sharma et~al.()Sharma, Agrawal, Bhatia, Prakash, and Mishra]{Sharma2021}
Purva Sharma, Anuj Agrawal, Vimal Bhatia, Shashi Prakash, and Amit~Kumar Mishra.
\newblock Quantum key distribution secured optical networks: A survey.
\newblock 2:\penalty0 2049--2083.
\newblock ISSN 2644-125X.
\newblock \doi{10.1109/ojcoms.2021.3106659}.

\bibitem[Shor and Preskill()]{Shor2000}
Peter~W. Shor and John Preskill.
\newblock Simple proof of security of the bb84 quantum key distribution protocol.
\newblock 85\penalty0 (2):\penalty0 441--444.
\newblock ISSN 1079-7114.
\newblock \doi{10.1103/physrevlett.85.441}.

\bibitem[Shu et~al.()Shu, Zhang, Chen, Zheng, and Fei]{Shu2023}
Hao Shu, Chang-Yue Zhang, Yue-Qiu Chen, Zhu-Jun Zheng, and Shao-Ming Fei.
\newblock Quantum key distribution over noisy channels by the testing state method.
\newblock 62\penalty0 (8).
\newblock ISSN 1572-9575.
\newblock \doi{10.1007/s10773-023-05393-x}.

\bibitem[Tian et~al.()Tian, Li, Ye, and Li]{Tian2022}
Yuan Tian, Jian Li, Chongqiang Ye, and Chaoyang Li.
\newblock Multi-party semi-quantum key distribution protocol based on hyperentangled bell states.
\newblock 10.
\newblock ISSN 2296-424X.
\newblock \doi{10.3389/fphy.2022.1023443}.

\bibitem[Tsai and Wang({\natexlab{a}})]{Tsai2024}
Chia-Wei Tsai and Chun-Hsiang Wang.
\newblock Multi-party quantum key distribution protocol in quantum network.
\newblock 11\penalty0 (1), {\natexlab{a}}.
\newblock ISSN 2196-0763.
\newblock \doi{10.1140/epjqt/s40507-024-00275-5}.

\bibitem[Tsai and Yang()]{Tsai2021}
Chia-Wei Tsai and Chun-Wei Yang.
\newblock Lightweight mediated semi-quantum key distribution protocol with a dishonest third party based on bell states.
\newblock 11\penalty0 (1).
\newblock ISSN 2045-2322.
\newblock \doi{10.1038/s41598-021-02614-3}.

\bibitem[Tsai et~al.({\natexlab{a}})Tsai, Yang, and Lee]{Tsai2019}
Chia-Wei Tsai, Chun-Wei Yang, and Narn-Yih Lee.
\newblock Lightweight mediated semi-quantum key distribution protocol.
\newblock 34\penalty0 (34):\penalty0 1950281, {\natexlab{a}}.
\newblock ISSN 1793-6632.
\newblock \doi{10.1142/s021773231950281x}.

\bibitem[Tsai et~al.({\natexlab{b}})Tsai, Yang, and Lin]{Tsai2022}
Chia-Wei Tsai, Chun-Wei Yang, and Jason Lin.
\newblock Multiparty mediated quantum secret sharing protocol.
\newblock 21\penalty0 (2), {\natexlab{b}}.
\newblock ISSN 1573-1332.
\newblock \doi{10.1007/s11128-021-03402-8}.

\bibitem[Tsai and Wang({\natexlab{b}})]{Tsai2023}
Chia‐Wei Tsai and Chun‐Hsiang Wang.
\newblock Efficient mediated quantum secret sharing protocol in a restricted quantum environment.
\newblock 535\penalty0 (11), {\natexlab{b}}.
\newblock ISSN 1521-3889.
\newblock \doi{10.1002/andp.202300116}.

\bibitem[Wehner et~al.()Wehner, Elkouss, and Hanson]{Wehner2018}
Stephanie Wehner, David Elkouss, and Ronald Hanson.
\newblock Quantum internet: A vision for the road ahead.
\newblock 362\penalty0 (6412).
\newblock ISSN 1095-9203.
\newblock \doi{10.1126/science.aam9288}.

\bibitem[Xian-Zhou et~al.()Xian-Zhou, Wei-Gui, Yong-Gang, Zhen-Zhong, and Xiao-Tian]{XianZhou2009}
Zhang Xian-Zhou, Gong Wei-Gui, Tan Yong-Gang, Ren Zhen-Zhong, and Guo Xiao-Tian.
\newblock Quantum key distribution series network protocol with m -classical bobs.
\newblock 18\penalty0 (6):\penalty0 2143--2148.
\newblock ISSN 1674-1056.
\newblock \doi{10.1088/1674-1056/18/6/006}.

\bibitem[Ye et~al.()Ye, Li, Chen, Hou, Dong, and Ota]{Ye2023}
Chong-Qiang Ye, Jian Li, Xiu-Bo Chen, Yanyan Hou, Mianxiong Dong, and Kaoru Ota.
\newblock Circular mediated semi-quantum key distribution.
\newblock 22\penalty0 (4).
\newblock ISSN 1573-1332.
\newblock \doi{10.1007/s11128-023-03915-4}.

\bibitem[Ye and Ye()]{Ye2018}
Tian-Yu Ye and Chong-Qiang Ye.
\newblock Measure-resend semi-quantum private comparison without entanglement.
\newblock 57\penalty0 (12):\penalty0 3819--3834.
\newblock ISSN 1572-9575.
\newblock \doi{10.1007/s10773-018-3894-0}.

\bibitem[Zhang et~al.()Zhang, Qiu, and Mateus]{Zhang2020}
Wei Zhang, Daowen Qiu, and Paulo Mateus.
\newblock Single-state semi-quantum key distribution protocol and its security proof.
\newblock 18\penalty0 (04):\penalty0 2050013.
\newblock ISSN 1793-6918.
\newblock \doi{10.1142/s0219749920500136}.

\bibitem[Zhou et~al.()Zhou, Zhu, and Zou]{Zhou2019}
Nan‐Run Zhou, Kong‐Ni Zhu, and Xiang‐Fu Zou.
\newblock Multi‐party semi‐quantum key distribution protocol with four‐particle cluster states.
\newblock 531\penalty0 (8).
\newblock ISSN 1521-3889.
\newblock \doi{10.1002/andp.201800520}.

\bibitem[Zou et~al.()Zou, Qiu, Zhang, and Mateus]{Zou2015}
Xiangfu Zou, Daowen Qiu, Shengyu Zhang, and Paulo Mateus.
\newblock Semiquantum key distribution without invoking the classical party’s measurement capability.
\newblock 14\penalty0 (8):\penalty0 2981--2996.
\newblock ISSN 1573-1332.
\newblock \doi{10.1007/s11128-015-1015-z}.

\end{thebibliography}

\end{document}